\documentclass[11pt]{article}
\usepackage{amsmath,amssymb,amsthm}
\usepackage{thm-restate}
\usepackage{fullpage}
\usepackage{microtype}
\usepackage{needspace}
\usepackage[colorlinks=true,linkcolor=blue,citecolor=blue,urlcolor=blue]{hyperref}
\usepackage{cleveref}

\newtheorem{theorem}{Theorem}[section]
\newtheorem{lemma}[theorem]{Lemma}

\newtheorem{claim}{Claim}[section]
\theoremstyle{definition}
\newtheorem{definition}[theorem]{Definition}
\numberwithin{equation}{section}

\crefname{theorem}{theorem}{theorems}
\crefname{lemma}{lemma}{lemmas}
\crefname{corollary}{corollary}{corollaries}
\crefname{claim}{claim}{claims}
\crefname{definition}{definition}{definitions}

\newcommand{\F}{\mathbb F}
\newcommand{\OCT}{\textnormal{\textsc{Odd Cycle Transversal}}}

\title{A Deterministic Polynomial Kernel for Odd Cycle Transversal}
\author{%
Tomohiro Koana\thanks{Graduate School of Information Science and Technology,
The University of Tokyo, Japan.\newline
Email: \texttt{tomohiro.koana@gmail.com}.}
\and
Soh Kumabe\thanks{CyberAgent, Inc., Tokyo, Japan.
Email: \texttt{kumabe\_soh@cyberagent.co.jp}.}}
\date{}

\begin{document}
\maketitle

\begin{abstract}
We give a deterministic polynomial kernel for \textsc{Odd Cycle Transversal},
derandomizing the randomized kernel of Kratsch and Wahlstr\"om (TALG 2014).
Our algorithm uses a deterministic polynomial-time construction of
almost multilinear representations of gammoids. Such a representation
assigns a block of columns to each element so that, for every subset of
elements, the normalized matrix rank approximates its matroid rank to
within a prescribed additive error \(\delta\).
The construction builds on recent breakthroughs in NC algorithms for
matching. Our kernelization algorithm then computes the required
representative families from these representations.
\end{abstract}
\clearpage

\section{Introduction}\label{sec:introduction}

An \emph{odd cycle transversal} of an undirected graph \(G\) is a
set \(X\subseteq V(G)\) such that \(G-X\) is bipartite.
In \OCT{}, the input consists of \(G\) and a nonnegative integer
\(k\), and the task is to decide whether \(G\) has an odd cycle
transversal of size at most \(k\).
Reed, Smith, and Vetta established that the problem is
fixed-parameter tractable using iterative compression
\cite{ReedSmithVetta2004}.
A central question was whether \OCT{} also admits a
\emph{polynomial kernelization}: a polynomial-time reduction to an
equivalent instance of the same problem whose size is polynomial in
\(k\) \cite{CyganEtAl2015,FominEtAl2019}.
Kratsch and Wahlstr\"om resolved this question by using matroid
representations to obtain a randomized polynomial kernel
\cite{KratschWahlstrom2014}.
This work received the 2018 EATCS--IPEC Nerode Prize.
Their subsequent framework based on representative families and cut covering
extended this approach to several related problems
\cite{KratschWahlstrom2020}.
Obtaining a deterministic polynomial kernel for \OCT{} on general
graphs has remained a longstanding open problem
(see e.g., \cite{Kratsch2014,BodlaenderEtAl2016,MajumdarRamanSaurabh2018,JansenPilipczukVanLeeuwen2019,CrespelleEtAl2023,GurjarEtAl2025}).
Jansen, Pilipczuk, and van Leeuwen made partial progress toward this goal
by giving a deterministic polynomial kernel for planar graphs
\cite{JansenPilipczukVanLeeuwen2019}.
More recently, \cite{GurjarEtAl2025} obtained a deterministic reduction for
general graphs to an instance of polynomial size in quasipolynomial time.
Despite this progress, it remained open whether \OCT{} admits a
deterministic polynomial kernel on general graphs.
We resolve this question.

\begin{restatable}{theorem}{octkernel}\label{thm:oct-kernel}
\OCT{}, parameterized by the number \(k\) of deleted vertices,
admits a deterministic polynomial-time kernelization with
\(O(k^{12})\) vertices and edges.
\end{restatable}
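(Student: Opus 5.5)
The plan is to follow the Kratsch--Wahlstr\"om pipeline and replace its single randomized ingredient, the linear representation of a gammoid, by the deterministic almost multilinear representation from the abstract.

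\textbf{Step 1: compress to a cut problem.} Compute a 3-approximate or $O(\sqrt{\log n})$-approximate odd cycle transversal $X$ deterministically. If $|X|$ exceeds a polynomial in $k$, we can answer no and output a trivial no-instance. Bipartition $G-X$ and build the standard auxiliary graph. In this graph each $x\in X$ is replaced by the terminals $x_1,x_2$, giving a terminal set $T$ with $|T|=O(k)$. Solutions of size at most $k$ then correspond to a partition of $T$ into three groups together with a minimum vertex cut separating two of them, with the third group deleted. The original correctness argument shows that it suffices to keep a set $Z$ of non-terminal vertices. $Z$ must contain, for every relevant partition, some minimum cut within budget $k$.

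\textbf{Step 2: cut-covering via representative sets.} Use the gammoid on the graph with each vertex duplicated, with sources $T$ plus sink copies. Kratsch--Wahlstr\"om show that a vertex $v$ is \emph{essential} for some partition exactly when $\{v,v'\}$ extends an independent set of a specific form in this gammoid (the $(T\cup\{v\})$-linkage). A representative set computed via a $q$-fold tensor or exterior power, with $q=O(k)$ and ground rank $O(k)$, then yields $Z$ of size $O(k^3)$. I replace the linear representation with the deterministic almost multilinear representation. Each element receives a block of columns, and the normalized rank approximates the true rank within additive error $\delta$. Choosing $\delta<1/2$ makes independence of any set of size at most $r=O(k)$ exactly decidable by rounding.

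The step I expect to be the main obstacle is redoing the representative-family lemma for block representations. The plan is to take exterior powers of the blocked matrix and show that the spans of the relevant wedge products of blocks still separate extendable from non-extendable sets. Representative sets are then computed by Gaussian elimination over the exterior power space. Its dimension grows with the block size, which is polynomial in $k$ and $1/\delta$. This is where the exponent degrades from $k^3$ to a larger polynomial. I expect the selected set to have size polynomial in the block dimension, giving $|Z|=O(k^{c})$ with $c$ small.

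\textbf{Step 3: reconstruct the instance.} From $Z$ and $T$, form the Kratsch--Wahlstr\"om reduced graph. Non-kept vertices are removed with their neighbours made into cliques or connected through transitive closure of paths avoiding $Z$, preserving cut sizes up to $k$. Then convert this cut instance back into an odd cycle transversal instance by the known polynomial gadget: edges between opposite copies of terminals, plus subdivisions to control parity. Counting vertices and edges of the resulting graph yields the $O(k^{12})$ bound once the size of $Z$ from Step 2 is fixed. Everything is deterministic and polynomial time, because the NC-matching-based construction of the representation and the linear algebra are deterministic.
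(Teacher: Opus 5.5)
\textbf{There is a genuine gap in your Step 2, which is exactly where the kernel size is decided.} Selecting a basis by Gaussian elimination in a tensor or exterior power of the block representation bounds the output only by the ambient dimension, roughly \((t(r+\delta))^d\) for scale \(t\). You assume \(t\) is polynomial in \(k\) and \(1/\delta\), but nothing supports this. The deterministic gammoid representation is the dual of a representation of a transversal matroid of rank about \(|V(G)|-|S|\). There the scale is \(t=\lceil 2n^2/\delta\rceil\), with \(n\) the number of vertices of the bipartite graph, because the additive error is of order \(n\,r_T(F)/t\). Dualizing keeps the scale. So your \(Z\) would have size polynomial in \(n\), which is not a kernel, and the exponent \(c\) (hence \(O(k^{12})\)) is never determined.

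An exact wedge-product criterion also fails on its own terms, because the guarantee is only additive. For the dualized representation, \(\operatorname{rank}H[F]\) can be below \(t|F|\) for an independent \(F\), so the wedge of its block columns can vanish. Finally, deciding independence by rounding with \(\delta<1/2\) does not give a polynomial-time way to compute representative families.

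\textbf{The missing idea is a thresholded greedy selection whose output size does not depend on \(t\).} Process the tuples and keep \(F=(v_1,\ldots,v_d)\) only if adding \(T_F=\bigotimes_i S_i(v_i)\) raises the dimension of the current span \(L\) by at least \(t^d/2\). Since \(\dim L\leq t^d(r+\delta)^d\), at most \(2r^d\) tuples are kept. For correctness, fix independent \(Y_i\) and quotient by the spans \(S_i(Y_i)\). A good tuple's image has dimension at least \(t^d(1-d\delta)\). All bad tuples map into a space of dimension at most \(2d\delta\, t^d(r+\delta)^{d-1}\), because the closure of \(Y_i\) adds at most \(2\delta t\) to the span of \(Y_i\). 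This requires \(\delta\leq 1/(4d(r^{d-1}+1))\), not just \(\delta<1/2\); with it, the first good tuple passes the threshold. Taking \(d=3\) (a uniform matroid plus two gammoids) gives \(|Z|=O(|X|^3)\).

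\textbf{Step 1 also needs correcting.} No polynomial-time \(3\)-approximation for \OCT{} is known. With the \(O(\sqrt{\log n})\)-approximation you cannot simply reject when \(|X|\) exceeds a polynomial in \(k\). You must first solve the case \(n\geq 3^k\) exactly with the FPT algorithm; after that, \(|X|=O(k^{1.5})\). The paper instead uses the deterministic algorithm of Kolay et al.\ to get \(|X|=O(k^2)\), hence \(|Z|=O(k^6)\), and the parity encoding then yields \(O(|Z|^2)=O(k^{12})\).
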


The same algorithm, applied to the reductions of Kratsch and
Wahlstr\"om \cite{KratschWahlstrom2020}, also yields deterministic
polynomial kernels for other cut problems, including
\textsc{Almost 2-SAT}, \textsc{Group Feedback Vertex Set} over
any fixed finite group, and \textsc{Vertex Multiway Cut} with a
fixed number of undeletable terminals.

\paragraph{Deterministic cut covering}
Our key ingredient is a derandomization of the cut-covering lemma
of Kratsch and Wahlstr\"om \cite{KratschWahlstrom2020}.
Let \(D\) be a directed graph. For \(A,B\subseteq V(D)\), an
\emph{\((A,B)\)-cut} is a vertex set meeting every directed path
from \(A\) to \(B\); it may contain vertices of \(A\cup B\).
Given disjoint terminal sets \(S,T\subseteq V(D)\), the aim is to
find a small vertex set \(Z\) containing a minimum \((A,B)\)-cut
for every \(A\subseteq S\) and \(B\subseteq T\).
Kratsch and Wahlstr\"om compute such a set of size
\(O((|S|+|T|)^3)\) in randomized polynomial time.

\begin{restatable}{theorem}{cutcovering}\label{thm:cut-covering}
Given a directed graph \(D\) and disjoint terminal sets
\(S,T\subseteq V(D)\), one can compute, in deterministic polynomial
time, a set \(Z\subseteq V(D)\) of size \(O((|S|+|T|)^3)\)
containing \(S\cup T\) and a minimum \((A,B)\)-cut for every
\(A\subseteq S\) and \(B\subseteq T\).
\end{restatable}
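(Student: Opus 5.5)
We follow the Kratsch--Wahlstr\"om route through representative sets, and replace their random linear representation of a gammoid with the deterministic almost multilinear representation mentioned in the abstract. Make \(|S|+|T|\) copies of terminals so that each \(s\in S\) gets a sink-only copy \(s^-\) and each \(t\in T\) a source-only copy \(t^+\). Then each vertex \(v\in V(D)\) is split into \(v^-,v^+\) with an arc \(v^-\to v^+\) (for non-terminals). The standard argument of \cite{KratschWahlstrom2020} shows the following. A vertex \(v\) is \emph{essential* for a pair \((A,B)\)} if \(v\) lies in every minimum \((A,B)\)-cut. Moreover, if \(v\) is essential for \((A,B)\), then there is a set \(X\) of size at most \(|S|+|T|\) built from \(A\), \(T\setminus B\), and \(S\), \(T\) with the following property: in the gammoid on the split graph with source set \(S\cup T\)-copies, \(X\cup\{v^-\}\) is independent and extends \(\{v^+,\ldots\}\), while \(X\cup\{u\}\) for any other vertex can be handled by induction. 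More concretely, \(v\) essential for \((A,B)\) is equivalent to a rank identity of the form
\[
r(A\cup (T\setminus B)\cup\{v\}) \neq r(A\cup (T\setminus B)\cup\{v^-,v^+\}) - 1,
\]
that is, \(v\) is a ``closest-cut'' witness. The classical lemma then states that \(Z\) can be taken as \(S\cup T\) together with the elements of a representative family of the third exterior/tensor power of the gammoid. The size \(O((|S|+|T|)^3)\) is the dimension count \(\binom{r}{3}\), with \(r\le |S|+|T|\).

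\textbf{Step 1.} Deterministically construct the almost multilinear representation of the gammoid on the split digraph with sources \(S\cup\{t^+\}\). Each element \(e\) gets a block \(M_e\) of \(N\) columns such that \(\bigl|\operatorname{rank}(M_F)/N - r(F)\bigr|\le\delta\) for all \(F\), with \(\delta<1/2\) to be chosen (say \(\delta=1/(10(|S|+|T|))\)). This is an earlier result of the paper.

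\textbf{Step 2.} For each vertex \(v\), form the tensor \(M_{v^-}\otimes M_{v}\otimes M_{v^+}\), or the appropriate product of blocks used by Kratsch--Wahlstr\"om for the cut-covering rank test. Its column space lies in a space of dimension \(O(N^3 r^3)\), after projecting to a rank-\(r\) representation. Greedily select a maximal set \(Z\) of vertices whose block-subspaces increase the span by at least a \((1-\delta)N^3\) fraction each. Since each blocks contributes nearly \(N^3\) dimensions and the total is at most \(N^3\binom{r+2}{3}(1+\delta)\), we get \(|Z|=O(r^3)\).

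\textbf{Step 3.} Prove correctness. If \(v\) is essential for some \((A,B)\) but not selected, its block lies almost entirely in the span of the selected blocks. Pairing with the block of the witness set \(X\) via the exterior-product/determinant test, the nonvanishing pairing for \(v\) forces a nonvanishing pairing for some selected \(u\). The normalized-rank approximation then transfers this to a true matroid statement, showing \(u\) is also essential for \((A,B)\), and hence a minimum cut lies inside \(Z\) by the iterative argument of Kratsch--Wahlstr\"om.

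\textbf{Main obstacle.} The main obstacle is Step 3. Exact linear algebra (``in the span implies pairing zero'') degrades to approximate statements, and errors of size \(\delta N\) per block must not accumulate across \(O(r)\) tensor factors and the union over witnesses. I would first try to bound the deficiency multiplicatively, \((1-\delta)^3\), and choose \(\delta\) polynomially small in \(r\) so that the accumulated error stays below the integrality gap of \(1\) in matroid rank.
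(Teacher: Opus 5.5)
\textbf{Step 3 has a genuine gap, and as set up it fails.} With block representations, a nonvanishing pairing no longer separates good tuples from bad ones. Fix independent sets $Y_i$. Let $S_i(X)$ be the span of the blocks indexed by $X$, let $\mathcal V_i$ be the span of all blocks of the $i$-th representation, and let $\pi_i\colon\mathcal V_i\to\mathcal V_i/S_i(Y_i)$ be the quotient maps.

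\begin{itemize}
\item Each $\dim S_i(v_i)$ is only guaranteed to be at least $(1-\delta)N$. So a good tuple may have $\dim T_F<(1-\delta)N^3$, and your threshold then rejects it even when nothing has been selected yet.
\item More generally, rejection of $F_v$ only gives $\dim\pi(L)>N^3\bigl((1-\delta)^3-(1-\delta)\bigr)$, which is vacuous.
\item Conversely, a bad tuple (some $v_i$ in the closure of $Y_i$) can have a nonzero image, of dimension up to $2\delta N^3$. So ``some selected tuple has nonzero image'' does not make that tuple good.
\end{itemize}

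The missing idea, which is the content of \Cref{thm:selection}, is a two-sided quantitative gap. Use threshold $N^3/2$. Good tuples satisfy $\dim\pi(T_F)\ge N^3(1-3\delta)$. The images of \emph{all} bad tuples together span at most $6\delta N^3(r+\delta)^2$ dimensions. This holds because every bad tuple with bad coordinate $i$ maps into $\pi_i(S_i(C_i))\otimes\bigotimes_{j\ne i}\mathcal V_j/S_j(Y_j)$, where $C_i$ is the closure of $Y_i$, and $\dim\pi_i(S_i(C_i))\le 2\delta N$ because $r(C_i)=r(Y_i)$. Then, while only bad tuples have been retained, the first good tuple raises $\dim L$ by at least $N^3/2$ and is kept; $\delta=1/(12(r^2+1))$ suffices. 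Alternatively, charging $2\delta N^3$ to each of the $O(r^3)$ retained bad tuples, with $\delta=O(r^{-3})$, would also work. The per-tuple factor $(1-\delta)^3$ you propose does not by itself control this accumulation.

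\textbf{The combinatorial part also needs repair.} Showing that ``some selected $u$ is also essential for $(A,B)$'' does not let you bypass $v$. The iteration needs $v$ itself to be retained whenever it is essential for some pair. Kratsch and Wahlstr\"om obtain this as follows:
\begin{itemize}
\item take the direct sum $M_0\oplus M_1\oplus M_2$ of a rank-$r$ uniform matroid and two gammoids, one on $D$ with sources $S$ and one on the reversed graph with sources $T$, each with sink-only copies;
\item use the triples $F_v=(v_0,v'_1,v'_2)$;
\item use the witnesses $Y_0=C_A\setminus\{v\}$, $Y_1=C_A\cup(S\setminus A)$, $Y_2=C_B\cup(T\setminus B)$, for which $F_v$ is the \emph{unique} extending triple.
\end{itemize}
The uniform summand is what excludes the other essential vertices. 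Your single gammoid with sources at $S$ and the $T$-copies, the tensor $M_{v^-}\otimes M_v\otimes M_{v^+}$ of blocks of one matroid, and the displayed rank identity do not implement this and are not justified.

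Finally, block representations admit no ``projection to a rank-$r$ representation.'' The ambient space is $\bigotimes_i\mathcal V_i$, of dimension at most $N^3(r+\delta)^3$. With threshold $N^3/2$ this still yields at most $2r^3$ retained triples.
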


\paragraph{Prior approach}
The cut-covering framework of Kratsch and Wahlstr\"om
\cite{KratschWahlstrom2020} identifies vertices that can be safely
bypassed using representative families in gammoids.
A \emph{gammoid} is a matroid whose independent sets are the vertex
subsets reachable from a fixed source set by vertex-disjoint directed
paths.
For a family \(\mathcal F\) of subsets of a matroid's ground set,
a \emph{representative family} is a subfamily that preserves all
independent extensions. That is, for every independent set \(Y\),
if some member of \(\mathcal F\) is disjoint from \(Y\) and its
union with \(Y\) is independent, then the subfamily also contains
such a member.
See \Cref{sec:preliminaries} for the formal definitions.

To apply this notion to cut covering, their algorithm encodes each
vertex as a triple in the direct sum of one uniform matroid and two
gammoids. A representative family of these triples retains every vertex
that belongs to all minimum cuts for some choice of terminal subsets.
Thus, unselected vertices can be safely bypassed, and repeating this
process yields a cut-covering set.

The size bound comes from Lov\'asz's exterior-algebra argument
\cite{Lovasz1977}: any family of \(d\)-element independent sets
in a rank-\(r\) linear matroid has a representative family of size
at most \(\binom{r}{d}\).
This argument also yields a deterministic algorithm for computing such
a family, as developed in Marx's work on parameterized matroid problems
\cite{Marx2009}.
The algorithm takes an explicitly listed family and a linear
representation of the matroid as input. Such a representation assigns
a vector to each element so that a set is independent exactly when
the corresponding vectors are linearly independent.

It therefore remains to construct linear representations of the
uniform matroid and the two gammoids used in the cut-covering algorithm.
For uniform matroids, such a representation can be constructed
deterministically in polynomial time using a Vandermonde matrix.
For gammoids, a polynomial-time construction is known only with
randomization \cite{Marx2009}, using polynomial identity testing
\cite{Schwartz1980,Zippel1979}.
This construction is where randomization enters the cut-covering algorithm.
We explain these constructions and the representative-family computation
in \Cref{sec:randomized-approach}.

% Earlier deterministic approaches include constructions parameterized
% by matroid rank \cite{MisraEtAl2020}.
% A transversal matroid on a bipartite graph with bipartition \((V,W)\)
% has ground set \(V\), and its independent sets are the subsets of
% \(V\) that can be matched into \(W\).
% The \emph{union representations} introduced in
% \cite{LokshtanovEtAl2018} are collections of linear matroids in
% which a set is independent in the original matroid exactly when it
% is independent in at least one member of the collection.
% That work constructed such representations for transversal matroids
% in quasipolynomial time and used them to compute representative
% families. These representations were combined into a single linear
% representation and extended to gammoids in \cite{GurjarEtAl2025},
% yielding deterministic quasipolynomial-time cut covering and
% kernelization.

\paragraph{Our approach}
The preceding discussion shows that constructing linear representations
of gammoids in deterministic polynomial time would derandomize the
cut-covering lemma. We do not know how to obtain such a construction.
Instead, we construct almost multilinear representations of gammoids
and compute from them the exact representative families needed for
cut covering, with both steps running in deterministic polynomial time.

For a matroid \(M\), let \(r_M(F)\) be
the maximum size of an independent subset of \(F\).
A \(\delta\)-almost multilinear representation \cite{KuhneYashfe2025}
has an integer scale
\(t\geq1\) and assigns a subspace \(S_M(e)\) of dimension at most
\(t\) to each element, such that
\(\bigl|t^{-1}\dim\sum_{e\in F}S_M(e)-r_M(F)\bigr|\leq\delta\)
for every \(F\) (\Cref{def:almost-multilinear}).
Linear representations give the case \(t=1\) and \(\delta=0\).
Allowing a small rank error lets us construct such representations
of gammoids in deterministic polynomial time.
We first give the construction for transversal matroids.

\begin{restatable}{theorem}{transversalrepresentation}\label{lem:transversal-blocks}
Given a transversal matroid \(T\) on a bipartite graph \(G\) and
a rational \(\delta>0\), one can deterministically construct
a prime \(p\) and a \(\delta\)-almost multilinear representation
of \(T\) over \(\F_p\). The prime \(p\) and the running time
are polynomially bounded in the input size and \(1/\delta\).
\end{restatable}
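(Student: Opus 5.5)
The plan is to reduce the construction to finding, deterministically, \emph{one} weight assignment on a blown-up bipartite graph that works simultaneously for every subset \(F\), with only small rank loss. Let \(G=(V\cup W,E)\), where \(V\) is the ground set of \(T\), and set \(n=|V|+|W|\).

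\textbf{Step 1 (blow-up).} Fix an integer scale \(t=\lceil c\,n/\delta\rceil\) for a suitable constant \(c\). Form \(G^{(t)}\) by replacing each vertex \(u\) with \(t\) copies \(u^{(1)},\dots,u^{(t)}\), and each edge \(vw\) with the complete bipartite graph between the copies of \(v\) and of \(w\). For \(F\subseteq V\), let \(F^{(t)}\) be the set of all copies of elements of \(F\). Hall's condition scales, and a fractional matching in \(G\) can be rounded inside the blow-up (by integrality of bipartite matching polytopes). Hence the maximum matching of \(F^{(t)}\) into \(W^{(t)}\) has size exactly \(t\,r_T(F)\). So it suffices to build a matrix \(A\) with rows indexed by \(V^{(t)}\) and columns by \(W^{(t)}\) such that, for every \(F\), \(\operatorname{rank} A[F^{(t)},\cdot]\ge t\,r_T(F)-\delta t\). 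The upper bound \(\operatorname{rank}\le t\,r_T(F)\) is automatic whenever \(A\) is supported on the edges of \(G^{(t)}\). Setting \(S_T(v)\) to be the row space of the \(t\) rows belonging to \(v\) then yields the \(\delta\)-almost multilinear representation.

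\textbf{Step 2 (weights and evaluation).} Set \(A[v^{(i)},w^{(j)}]=\alpha^{\omega(v^{(i)}w^{(j)})}\) over \(\F_p\), where \(\omega\) is a weight function with polynomially bounded integer values. Take a maximum-size square subsystem of \(A[F^{(t)},\cdot]\). If the minimum-weight matchings of that size have nonzero signed count, then its determinant, as a polynomial in \(\alpha\), has a nonzero lowest coefficient. This polynomial has degree \(\mathrm{poly}(n,t)\). Choosing \(p\) larger than the total number of such degree bounds, and picking \(\alpha\) by brute force over \(\F_p\), then preserves all the needed ranks. The goal is therefore a weight function \(\omega\) under which, for every \(F\), some matching of size at least \(t(r_T(F)-\delta)\) inside \(F^{(t)}\) is isolated.

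\textbf{Step 3 (almost isolation, the main obstacle).} Full isolation for all sub-bipartite graphs is exactly what the Fenner--Gurjar--Thierauf approach achieves only with quasipolynomial weights. This is where the slack must be spent. I would run their iterative scheme for only \(O(1/\delta)\)-many, or \(O(\log(1/\delta))\)-many, rounds of polynomially bounded weights. Each round eliminates short alternating cycles on the current minimum-weight face. The key extra claim is a counting lemma: if the minimum-weight face of \(G^{(t)}[F^{(t)}\cup W^{(t)}]\) still contains no long alternating cycles, then it contains a matching covering all but a \(\delta\)-fraction of a maximum matching, and this matching can be isolated by deleting few vertices per copy. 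The blow-up symmetry should help here: the face decomposes into "copy-blocks", so only \(O(n)\) unmatched vertices are lost in total, and \(O(n)\le\delta t\) by the choice of \(t\). The weights from the different rounds are combined with polynomially bounded bases. Since the number of rounds depends only on \(\delta\), \(p\) and the running time stay polynomial in \(n\) and \(1/\delta\), matching the bounds claimed in \cref{lem:transversal-blocks}.

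I expect the counting lemma in Step 3 to be the crux. It must control the rank loss caused by the non-isolated part of the face, uniformly over all \(2^{|V|}\) subsets \(F\). My first attempt would be to charge each remaining non-isolated alternating cycle to a pair of original vertices of \(G\), and then show that the blow-up's copy-symmetry lets one break all such cycles by removing \(O(1)\) copies per original vertex.
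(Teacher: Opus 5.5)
\textbf{The gap is Step 3, and the approach cannot be completed as sketched.} Your Step 1 is sound. The blow-up has matching number exactly $t\,r_T(F)$ on $F^{(t)}\cup W^{(t)}$, and term rank gives the upper bound. But the whole proof rests on the counting lemma of Step 3, and as formulated that lemma cannot come from a number of rounds depending only on $\delta$.

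Fenner--Gurjar--Thierauf rounds remove \emph{short} cycles from the current face. After $k$ rounds you only know that the face has girth larger than about $2^k$; long cycles survive. So the hypothesis ``no long alternating cycles'' is never established. Large girth $g$ by itself bounds the cycle rank of an $N$-vertex face only by order $N\log N/g$, and graphs of girth $g$ with cycle rank of that order exist. Your allowed loss, however, is $\delta t=\delta N/n$ with $N=tn$. Reaching that loss via girth alone forces $g\gtrsim n\log N/\delta$. That means $\Theta(\log(n/\delta))$ rounds of $\Theta(\log N)$-bit weights, which is exactly the quasi-polynomial bound you are trying to avoid.

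The copy-symmetry heuristic does not rescue this. If $\omega$ is invariant under permuting copies, the face contains the full $K_{t,t}$ on the copies of every face edge, so it is far from isolated. If $\omega$ is not invariant, the face has no copy-block structure to exploit. What Step 3 really needs is polynomially bounded weights that nearly isolate a maximum matching in all $2^{|V|}$ subgraphs $G^{(t)}[F^{(t)}\cup W^{(t)}]$ simultaneously. That goes beyond the known isolation results, and your sketch gives no mechanism for it.

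\textbf{A smaller, repairable problem in Step 2.} With $p$ polynomially bounded, a single $\alpha\in\F_p$ must be a non-root of $2^{|V|}$ minor polynomials at once. Brute force cannot certify a candidate without enumerating all $F$, and if $p$ exceeds the sum of their degrees it is exponential. You can fix this by letting $\alpha$ generate an extension $\F_{p^m}$ with $m$ above the degree bound. Then expand each entry into its $m\times m$ multiplication matrix over $\F_p$, which multiplies the scale by $m$.

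\textbf{How the paper avoids isolation.} The paper puts the error on the opposite side from yours. It gives each $v$ the space $U_v=(X-\alpha_v)^K\F_p[X]_{<t}$, and for each edge $vw$ it records the first $\ell=t+2n$ Taylor coefficients at $\beta_w$. Suppose the columns over an independent set were dependent. Then the span of the resulting polynomials would have multiplicity at least $K$ at the points $\alpha_v$ and at least $\ell$ at the matched points $\beta_w$. This violates the Wronskian multiplicity bound. Hence the rank $\rho_T(F)$ of the blocks indexed by $F$ satisfies $\rho_T(F)\ge t\,r_T(F)$ exactly, for every $F$ at once, with no substitution step.

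The slack is paid on the upper side instead. K\"onig's theorem gives $\rho_T(F)\le t|V_0|+\ell|W_0|\le(t+2n)\,r_T(F)$, and choosing $t=\lceil 2n^2/\delta\rceil$ absorbs the excess. The missing idea in your proposal is this trade: accept a few extra rows per right vertex in exchange for an exact, isolation-free lower bound.
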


The construction builds on a recent breakthrough on bipartite matching.
Determinant computation is in NC, meaning deterministic polylogarithmic
parallel time using polynomially many processors \cite{Csanky1976}.
Applying polynomial identity testing to the determinant of the Edmonds
matrix, the symbolic biadjacency matrix of the graph, therefore gives a
randomized NC algorithm for deciding whether a perfect matching exists.
A deterministic NC algorithm remained elusive for decades, until
Chatterjee, Ghosh, Gurjar, Raj, and Thierauf proved that bipartite
matching is in NC \cite{ChatterjeeEtAl2026}.
Their method uses the explicit subspace designs of Guruswami and
Kopparty \cite{GuruswamiKopparty2016}.
Kopparty and Saraf \cite{KoppartySaraf2026} subsequently simplified
this approach by giving a criterion based on polynomial multiplicities,
with improved parameters for the resulting rank computation.
We adapt this simplified construction to obtain an almost multilinear representation.

Dualizing the construction gives the
corresponding result for gammoids.

\begin{restatable}{corollary}{gammoidrepresentation}\label{thm:almost-multilinear}
Given a gammoid \(M\) on a directed graph \(G\) with sources
\(S\subseteq V(G)\) and ground set \(V\subseteq V(G)\), and a rational
\(\delta>0\), one can deterministically construct a prime \(p\)
and a \(\delta\)-almost multilinear representation of \(M\)
over \(\F_p\). The prime \(p\) and the running time are
polynomially bounded in the input size and \(1/\delta\).
\end{restatable}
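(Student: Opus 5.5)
We derive the corollary from \Cref{lem:transversal-blocks} through the classical duality between gammoids and transversal matroids, together with a duality statement for almost multilinear representations.

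\textbf{Step 1 (reduce to strict gammoids).} Every gammoid is a restriction of a strict gammoid, namely the one whose ground set is all of \(V(G)\). A \(\delta\)-almost multilinear representation restricts to one by keeping only the subspaces \(S_M(e)\) with \(e\in V\). So we may assume \(V=V(G)\).

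\textbf{Step 2 (Ingleton--Piff).} A strict gammoid is the dual of a transversal matroid. Concretely, build the bipartite graph \(H\) with left side \(V(G)\) and right side \(\{v' : v\in V(G)\setminus S\}\). Add an edge \(vv'\) for every \(v\notin S\), and an edge \(uv'\) for every arc \((u,v)\) of \(G\) with \(v\notin S\). Then a set \(B\) is a basis of the strict gammoid exactly when \(V(G)\setminus B\) is matchable onto the right side. This construction runs in polynomial time.

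\textbf{Step 3.} Apply \Cref{lem:transversal-blocks} to \(T=M[H]\) with error parameter \(\delta'\) to obtain \(p\), a scale \(t\), and subspaces \(S_T(e)\subseteq \F_p^N\).

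\textbf{Step 4 (dualize the representation).} Fix a basis \(b_e\) of each \(S_T(e)\), padding to exactly \(t\) vectors by splitting into copies if needed. Let \(A\) be the \(N\times nt\) matrix formed by concatenating these blocks, where \(n=|V(G)|\). Let \(A^\perp\) be a matrix whose row space is the orthogonal complement of the row space of \(A\) inside \(\F_p^{nt}\). This is computable by Gaussian elimination. Define \(S_M(e)\) as the span of the block of columns of \(A^\perp\) indexed by \(e\).

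The linear-algebra identity
\[
\operatorname{rk}A^\perp[F]=|F|t-\operatorname{rk}A+\operatorname{rk}A[\overline F]
\]
holds for every \(F\subseteq V(G)\). Here \(A[F]\) denotes the columns in the blocks of \(F\). Using \(r_{M}(F)=|F|-r_T(E)+r_T(\overline F)\), we get
\[
\Bigl|t^{-1}\dim\sum_{e\in F}S_M(e)-r_M(F)\Bigr|\le 2\delta'.
\]
The factor of two comes from the two rank errors, for \(E\) and for \(\overline F\). Choosing \(\delta'=\delta/2\) gives the claimed bound.

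\textbf{Main obstacle.} The delicate point is the padding. The definition allows \(\dim S_T(e)<t\), but the identity above requires every block to have exactly \(t\) columns. If \(S_T(e)\) has dimension \(d<t\), we cannot simply pad with zero columns. A zero column contributes \(t\) to \(|F|t\) without contributing any rank, and the term \(|F|t\) assumes each block carries full weight.

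I would first try to apply the identity to the actual block matrix with \(t\) columns per element, taking a spanning set that possibly has repeated columns. The identity
\[
\operatorname{rk}A^\perp[F]=\#\text{cols}(F)-\operatorname{rk}A+\operatorname{rk}A[\overline F]
\]
holds for any matrix, whatever its column structure, so exact block width \(t\) suffices and no independence within a block is needed. Repeated columns are therefore harmless.

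If the construction in \Cref{lem:transversal-blocks} yields blocks of exactly \(t\) columns, this step is immediate. Otherwise we pad \(S_T(e)\) with repeated vectors from its own span, which keeps every sum \(\sum_{e\in F}S_T(e)\) unchanged.

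Finally, \(p\) and the running time remain polynomial in the input size and \(1/\delta\), because the dualization consists of polynomial-size Gaussian elimination over \(\F_p\).
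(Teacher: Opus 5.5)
Your proposal is correct and takes essentially the same route as the paper: construct the Ingleton--Piff bipartite graph, apply \Cref{lem:transversal-blocks} with error \(\delta/2\), replace the block matrix by one whose rows span \(\ker A\) (your \(A^\perp\)) to get error \(2\cdot(\delta/2)\) as in \Cref{lem:dual-blocks}, and restrict to \(V\). The padding issue you raise does not arise, since \Cref{def:almost-multilinear} and the matrix \(C\) in \Cref{lem:transversal-blocks} already have exactly \(t\) columns per block; zero-column padding would in fact also be harmless (contrary to your remark), because it changes no column span and your rank identity holds for any matrix.
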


Given linear representations of \(d\) matroids of rank at most \(r\)
over a common field, Gaussian elimination in the corresponding tensor
space yields a representative family of size at most \(r^d\) for
any family of tuples containing one element from each matroid
(\Cref{subsec:linear-representative-families}).
We obtain representative families of asymptotically the same size
from sufficiently accurate almost multilinear representations.

\begin{samepage}
\begin{restatable}{theorem}{representativeselection}\label{thm:selection}
Let \(d,r,t\geq1\) be integers, let \(p\) be a prime, and let
\(M_1,\ldots,M_d\) be matroids of rank at most \(r\) on a common
ground set \(V\) of size \(n\), where \(r\leq n\).
Suppose that each \(M_i\) is supplied with a \(\delta\)-almost
multilinear representation \(A_i\) over \(\F_p\) of scale \(t\), where
\(0\leq\delta\leq1/(4d(r^{d-1}+1))\).
Given a family \(\mathcal F\subseteq V^d\), one can deterministically
compute a representative family
\(\widehat{\mathcal F}\subseteq\mathcal F\) of size at most \(2r^d\)
in time \(N^{O(d)}\), where \(N\) is the total bit length of
\(A_1,\ldots,A_d\).
\end{restatable}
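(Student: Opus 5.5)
The plan is to imitate the exact tensor-space Gaussian elimination of \Cref{subsec:linear-representative-families}, with linear independence replaced by a dimension threshold. Recall the notion of representative family for tuples: \(\widehat{\mathcal F}\) represents \(\mathcal F\) if, for every tuple \(Y=(Y_1,\dots,Y_d)\) with \(Y_i\) independent in \(M_i\), whenever some \(f\in\mathcal F\) has \(f_i\notin Y_i\) and \(Y_i+f_i\) independent in \(M_i\) for all \(i\), some \(g\in\widehat{\mathcal F}\) has the same property.

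\textbf{Setup.} For \(e\in V\), let \(S_i(e)\subseteq\F_p^{m_i}\) be the block of \(A_i\), with \(\dim S_i(e)\le t\). Since \(r_{M_i}(V)\le r\), the span \(W_i=\sum_e S_i(e)\) has dimension at most \(t(r+\delta)\); restrict to it. For a tuple \(f\) put \(T(f)=S_1(f_1)\otimes\cdots\otimes S_d(f_d)\subseteq W_1\otimes\cdots\otimes W_d\), a space of dimension at most \(t^d\). The ambient dimension is at most \((t(r+\delta))^d\). Since \(t\le N\) (and \(\dim W_i\le N\)), all linear algebra takes \(N^{O(d)}\) time.

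\textbf{Greedy selection.} Scan \(\mathcal F\) and maintain \(U=\sum_{g\in\widehat{\mathcal F}}T(g)\). Add \(f\) to \(\widehat{\mathcal F}\) iff \(\dim(U+T(f))-\dim U\ge \tfrac12 t^d\). Each selection raises \(\dim U\) by at least \(t^d/2\), so \(|\widehat{\mathcal F}|\le 2(r+\delta)^d\). This is below \(2r^d+1\) for our \(\delta\); to land exactly at \(2r^d\) one may shade the threshold to \(t^d(r+\delta)^d/r^d/2\).

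\textbf{Correctness via quotients.} Fix \(Y\) and a good unselected \(f\). Let \(Q_i=\sum_{y\in Y_i}S_i(y)\) and \(\pi=\bigotimes_i\pi_i\), where \(\pi_i:W_i\to W_i/Q_i\).
\begin{itemize}
\item For the good \(f\): the almost-multilinear inequalities for \(Y_i\) and \(Y_i+f_i\) give \(\dim\pi_i(S_i(f_i))\ge t(1-2\delta)\). Since the tensor product of quotients is exact, \(\dim\pi(T(f))\ge t^d(1-2\delta)^d\).
\item For a bad \(g\): some \(i\) has \(g_i\in Y_i\) or \(Y_i+g_i\) dependent, and then \(\dim\pi_i(S_i(g_i))\le 2\delta t\). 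Hence \(\dim\pi(T(g))\le 2\delta t^d\).
\end{itemize}
Suppose, for contradiction, that every selected \(g\) is bad. Let \(U'\) be the span at the moment \(f\) was rejected, so \(T(f)\subseteq U'+R\) with \(\dim R<t^d/2\). Applying \(\pi\),
\[
t^d(1-2\delta)^d\le \dim\pi(T(f))\le \sum_{g\in\widehat{\mathcal F}}\dim\pi(T(g))+\tfrac12 t^d\le 2r^d\cdot 2\delta t^d+\tfrac12t^d .
\]
This requires \((1-2\delta)^d>\tfrac12+4\delta r^d\). Using \((1-2\delta)^d\ge1-2d\delta\), it suffices that \(2d\delta+4\delta r^d<\tfrac12\).

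\textbf{Main obstacle.} The delicate step is reconciling this inequality with the stated bound \(\delta\le1/(4d(r^{d-1}+1))\), which is smaller than \(1/(8r^d)\) only by a factor related to \(r/d\). Plain counting over all selected bad tuples is therefore slightly too lossy. I would first sharpen the bad-tuple estimate by grouping selected tuples according to the coordinate \(i\) at which they fail. For a fixed \(i\), bound the projection of \(\sum_g T(g)\) using \(\pi_i(\sum_{g}S_i(g_i))\) together with \(\dim\bigl(\bigotimes_{j\ne i}W_j\bigr)\le (t(r+\delta))^{d-1}\); this charges \(\delta\) against \(r^{d-1}\) per coordinate, giving a total of roughly \(d\cdot 2\delta t^d(r^{d-1}+1)\le t^d/2\). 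The threshold constant would then be tuned accordingly.
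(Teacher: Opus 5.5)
Your algorithm is the paper's: greedy selection in \(\bigotimes_i S_i(V)\) with threshold \(t^d/2\) and size bound \(2(r+\delta)^d<2r^d+1\). The fix you sketch under ``Main obstacle'' is also the paper's correctness argument: quotient by \(S_i(Y_i)\), split bad tuples by the coordinate where they fail, and tensor a small image in coordinate \(i\) with the full quotients elsewhere. As written, however, the proof does not close, for two reasons.

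First, the per-coordinate estimate needs \(\dim\pi_i\bigl(\sum_g S_i(g_i)\bigr)\le2\delta t\) for all \(g\) failing at \(i\) simultaneously. Summing your tuple-wise bound \(\dim\pi_i(S_i(g_i))\le2\delta t\) brings back the factor \(|\widehat{\mathcal F}|\) you were trying to avoid. The missing fact is that every failing \(g_i\) lies in the closure \(C_i\) of \(Y_i\) in \(M_i\), and \(C_i\) has rank \(|Y_i|\). One application of the almost-multilinear guarantee to \(C_i\) then gives \(\dim S_i(C_i)-\dim S_i(Y_i)\le2t\delta\). This is the step that makes the paper's bound on the bad span work, and you leave it unstated.

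Second, your constants fail at the top of the allowed range of \(\delta\). With your good-side bound \(t(1-2\delta)\), you need \((1-2\delta)^d-2d\delta(r+\delta)^{d-1}\ge\frac12\). This is false, for example, when \(d=3\), \(r=1\), \(\delta=1/24\): the left side is \(3449/6912<\frac12\). Your phrase ``roughly \(2d\delta(r^{d-1}+1)\)'' hides exactly this loss.

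The paper instead lower-bounds the good coordinate by \(t(1-\delta)\). It uses that \(A_i[Y_i]\) has only \(t|Y_i|\) columns, so \(\dim S_i(Y_i)\le t|Y_i|\). The saved \(d\delta\) absorbs the error from \((r+\delta)^{d-1}<r^{d-1}+\frac12\) and gives \(1-2d\delta(1+r^{d-1})\ge\frac12\).

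Retuning the threshold is not a clean substitute. Lowering it to \(\tau t^d\) weakens the size bound to \((r+\delta)^d/\tau\), which you would then have to keep below \(2r^d+1\); you do not check this. Your suggested shading goes the other way and raises the threshold, which only makes correctness harder. It is also unnecessary, since \(|\widehat{\mathcal F}|<2r^d+1\) already gives \(|\widehat{\mathcal F}|\le2r^d\) by integrality.
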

\end{samepage}

The algorithm assigns a tensor subspace to each candidate and retains
it when it increases the span dimension by at least \(t^d/2\).
The ambient dimension bounds the number retained by \(2r^d\).
For every independent set that is extended by some candidate in $\mathcal F$, the first
such candidate increases the span dimension by at least \(t^d/2\)
and is therefore included in the representative family.
In the two-way cut-covering construction, one
uniform matroid and two gammoids give \(d=3\) and
\(r=\max\{|S|,|T|\}\). Thus the number of retained candidates is
\(O((|S|+|T|)^3)\), independently of the representation scale.

\paragraph{Organization}
\Cref{sec:preliminaries} collects the definitions.
\Cref{sec:randomized-approach} reviews prior methods for randomized
cut covering. Readers familiar with these methods may skip this section.
\Cref{sec:blocks} develops the polynomial tools and constructs the
almost multilinear representations,
and \Cref{sec:selection} proves \Cref{thm:selection}.
\Cref{sec:applications} derandomizes cut covering and proves the
\OCT{} kernel in \Cref{thm:oct-kernel}.

\section{Preliminaries}\label{sec:preliminaries}

\paragraph{Graphs}
All graphs are finite. We write \(V(G)\) and \(E(G)\) for the
vertex and edge sets, and \(G-X\) for the graph obtained by
deleting \(X\subseteq V(G)\).
Paths have no repeated vertices, respect arc directions in directed
graphs, and may have length zero. Vertex-disjoint paths share no
vertices, including endpoints. A matching is a set of pairwise
vertex-disjoint edges.

Let \(D\) be a directed graph and \(A,B\subseteq V(D)\).
An \emph{\((A,B)\)-cut} is a vertex set meeting every directed
path from \(A\) to \(B\); it may intersect \(A\cup B\).
By Menger's theorem, its minimum size equals the maximum number
of pairwise vertex-disjoint paths from \(A\) to \(B\).

\paragraph{Matroids}
A \emph{matroid} \(M=(V,\mathcal I)\) has a finite ground set
\(V\) and a nonempty family \(\mathcal I\subseteq2^V\) of
\emph{independent sets}, closed under taking subsets and satisfying
the exchange axiom: if \(I,J\in\mathcal I\) with \(|I|<|J|\), some
\(e\in J\setminus I\) satisfies \(I\cup\{e\}\in\mathcal I\).
A \emph{basis} is an inclusion-wise maximal independent set.
For \(F\subseteq V\), the \emph{rank} \(r_M(F)\) is the maximum
size of an independent subset of \(F\). The rank of \(M\) is \(r_M(V)\).

For a matroid \(M=(V,\mathcal I)\) and \(F\subseteq V\), the
\emph{restriction} \(M|F\) has ground set \(F\) and independent
sets \(\{I\in\mathcal I:I\subseteq F\}\).
The \emph{dual} \(M^*\) has ground set \(V\) and bases
\(V\setminus B\), where \(B\) ranges over the bases of \(M\).
The dual rank formula is
\(r_{M^*}(F)=|F|-r_M(V)+r_M(V\setminus F)\).

For matroids \(M_1,\ldots,M_d\) on pairwise disjoint ground sets
\(V_1,\ldots,V_d\), their \emph{direct sum}
\(M:=\bigoplus_{i=1}^d M_i\) has ground set
\(V:=\bigcup_{i=1}^d V_i\). A set \(I\subseteq V\) is independent
in \(M\) if and only if \(I\cap V_i\) is independent in \(M_i\)
for every \(i\).

The following are examples of matroids used in this paper.

\paragraph{Uniform matroids}
For a finite set \(V\) and an integer \(0\leq r\leq|V|\), the
\emph{uniform matroid} of rank \(r\) on \(V\) has as its independent
sets all subsets of \(V\) of size at most \(r\).

\paragraph{Transversal matroids}
Let \(G\) be a bipartite graph with bipartition \((V,W)\).
The \emph{transversal matroid} on \(G\) has ground set
\(V\). A set \(I\subseteq V\) is independent if and only if
\(G\) has a matching covering every vertex of \(I\).
In algorithmic statements, \(G\) and its bipartition are part of the input.

\paragraph{Gammoids}
Let \(G\) be a directed graph and \(S\subseteq V(G)\).
The \emph{strict gammoid} \(M\) on \(G\) with sources \(S\) has
ground set \(V(G)\) \cite{Mason1972}.
A set \(I\subseteq V(G)\) is independent if and only if there
are \(|I|\) pairwise vertex-disjoint directed paths from \(S\)
whose end vertices are exactly \(I\). Paths of length zero are allowed.
A \emph{gammoid} on \(G\) is a restriction \(M|V\) to a set
\(V\subseteq V(G)\). The restriction keeps the same independence
condition for \(I\subseteq V\), with paths still taken in \(G\).
In algorithmic statements, \(G,S,V\) are part of the input.

\paragraph{Linear representations}
A \emph{linear representation} of a matroid \(M\) on \(V\) over
a field \(\F\) is a matrix over \(\F\) with columns indexed by \(V\),
such that a set is independent in \(M\) if and only if its columns
are linearly independent. A matroid admitting such a representation
is \emph{linear} over \(\F\).
The three classes of matroids above are all linear.
For uniform matroids, a linear representation can be constructed
deterministically in polynomial time using a Vandermonde matrix.
For transversal matroids and gammoids, randomized polynomial-time
constructions are known, while deterministic polynomial-time
constructions remain open \cite{Marx2009,GurjarEtAl2025}.
We describe the randomized constructions in \Cref{sec:randomized-approach}.

\paragraph{Representative families}
Let \(M\) be a matroid on \(V\). A set \(A\subseteq V\)
\emph{extends} a set \(Y\subseteq V\) if
\(A\cap Y=\varnothing\) and \(A\cup Y\) is independent in \(M\).
Let \(\mathcal F\) be a family of subsets of \(V\).
A subfamily \(\widehat{\mathcal F}\subseteq\mathcal F\)
is a \emph{representative family} for \(\mathcal F\) if, for every
\(Y\subseteq V\), whenever some member of \(\mathcal F\) extends
\(Y\), some member of \(\widehat{\mathcal F}\) extends \(Y\).

For matroids \(M_1,\ldots,M_d\) on a common ground set \(V\),
we identify each tuple \((v_1,\ldots,v_d)\in V^d\) with the set
\(\{(v_i,i):1\leq i\leq d\}\) in their direct sum on disjoint
copies \(V\times\{i\}\).
Representativeness for a family \(\mathcal F\subseteq V^d\) refers
to this direct sum.

\paragraph{Linear algebra}
All vector spaces used below are finite-dimensional. For subspaces
\(U_1,\ldots,U_d\) of a common vector space, their sum is the span
of their union. For \(U_i\subseteq \mathcal V_i\) over a common field,
the tensor subspace
\(\bigotimes_{i=1}^d U_i\subseteq\bigotimes_{i=1}^d \mathcal V_i\)
is spanned by the tensors \(u_1\otimes\cdots\otimes u_d\)
with \(u_i\in U_i\), and has dimension \(\prod_i\dim U_i\).
If \(\pi:\mathcal V\to\mathcal V/W\) is the quotient map and \(U\subseteq\mathcal V\)
is a subspace, then \(\dim\pi(U)=\dim(U+W)-\dim W\).

\section{Prior methods for randomized cut covering}\label{sec:randomized-approach}

The randomized cut-covering algorithm of Kratsch and Wahlstr\"om
\cite{KratschWahlstrom2020} uses two algebraic steps: constructing
linear representations of gammoids and computing a representative family
in their direct sum with a uniform matroid.
We first describe randomized linear representations of transversal
matroids and gammoids in \Cref{subsec:randomized-representations},
then explain how to compute representative families for tuples
containing one element from each summand in
\Cref{subsec:linear-representative-families}.
The latter step is deterministic once the representations are given.
Finally, \Cref{subsec:randomized-cut-covering} explains how these
families identify vertices that can be safely bypassed in the
cut-covering construction.

\subsection{Randomized linear representations of transversal matroids and gammoids}
\label{subsec:randomized-representations}

A transversal matroid can be represented by the Edmonds matrix of
its bipartite graph.
Let \(T\) be the transversal matroid of a bipartite graph \(G\)
with bipartition \((V,W)\), and let \(n:=|V|\geq1\).
Fix a prime \(p\) and introduce one indeterminate
\(x_{vw}\) for each edge \(vw\in E(G)\).
Over \(\F_p\), the Edmonds matrix \(A(\mathbf x)\) has rows
indexed by \(W\), columns indexed by \(V\), and entries
\[
A(\mathbf x)_{w,v}:=
\begin{cases}
x_{vw},&\text{if }vw\in E(G),\\
0,&\text{otherwise.}
\end{cases}
\]
For \(I\subseteq V\) and \(J\subseteq W\) with \(|I|=|J|\),
the determinant of the submatrix on rows \(J\) and columns \(I\)
is a signed sum over the perfect matchings between \(I\) and \(J\).
Distinct matchings use distinct sets of edge variables and hence give
distinct monomials. Thus this determinant is a nonzero polynomial
exactly when such a matching exists.
It follows that \(A(\mathbf x)\) represents \(T\) over the rational
function field \(\F_p(\mathbf x)\).

To obtain a matrix over \(\F_p\), substitute an independently and
uniformly chosen field element for each variable.
A dependent set of \(T\) remains dependent under every substitution,
since all of its relevant minors are identically zero.
For each nonempty independent set \(I\), fix a matching covering \(I\)
and let \(J\) be its matched vertices in \(W\).
The corresponding determinant is a nonzero polynomial of degree
\(|I|\leq n\), so the Schwartz--Zippel bound
\cite{Schwartz1980,Zippel1979} gives probability at most \(n/p\)
that it vanishes. By a union bound over at most \(2^n\) independent
sets, the probability that the resulting matrix fails to represent
\(T\) is at most \(n2^n/p\).
Choosing a prime
\(2n2^n/\varepsilon<p<4n2^n/\varepsilon\),
the overall success probability is at least \(1-\varepsilon\).
Each field element can be encoded using
\(\lceil\log_2 p\rceil=O(n+\log(1/\varepsilon))\) bits.

We obtain representations of gammoids by dualizing.
Let \(G\) be a directed graph, let \(S\subseteq V(G)\),
and let \(M\) be the strict gammoid on \(G\) with sources \(S\).
Create a bipartite graph \(G_T\) with left side \(V(G)\) and right side
a disjoint copy of \(V(G)\setminus S\).
The right copy of \(u\) is adjacent to left vertex \(u\) and to
every left vertex \(v\) for which \(v\to u\) is an arc of \(G\).
If \(T\) is the transversal matroid of \(G_T\), then
\(M=T^*\) by transversal--gammoid duality \cite{IngletonPiff1973}.

Given a linear representation \(A\) of a matroid \(N\) on \(V\),
we construct a representation of \(N^*\) over the same field.
By Gaussian elimination and reordering columns, we obtain the form
\(A\) below and its dual representation \(H\), where \(r=\operatorname{rank}A\):
\[
A=\begin{bmatrix}I_r&D\end{bmatrix},
\qquad
H=\begin{bmatrix}-D^\top&I_{|V|-r}\end{bmatrix}.
\]

The row space of \(H\) is \(\ker A\), of dimension \(|V|-r\).
For \(F\subseteq V\), write \(A[F]\) and \(H[F]\) for the
submatrices on columns \(F\).
The row space of \(H[F]\) is obtained by projecting \(\ker A\)
onto \(F\). The vectors sent to zero are exactly those in \(\ker A\)
whose coordinates in \(F\) vanish. Their remaining coordinates form
\(\ker A[V\setminus F]\). Rank-nullity therefore gives
\[
\begin{aligned}
\operatorname{rank}H[F]
&=(|V|-r)-\dim\ker A[V\setminus F]\\
&=(|V|-r)-\bigl(|V\setminus F|-\operatorname{rank}A[V\setminus F]\bigr)\\
&=|F|-r_N(V)+r_N(V\setminus F)
=r_{N^*}(F).
\end{aligned}
\]
Thus \(H\) represents \(N^*\).

Applying this algorithm to a randomized representation of \(T\)
gives a representation of \(M=T^*\).

\subsection{Representative families in direct sums}
\label{subsec:linear-representative-families}

The classical method for computing representative families is based
on exterior algebra \cite{Lovasz1977,Marx2009,FominEtAl2016}.
For the direct sums considered here, each candidate contains one element
from each summand, so we can use simpler tensor algebra instead.
Let \(M_1,\ldots,M_d\) be matroids of rank at most \(r\)
on a common ground set \(V\), with linear representations
\(A_i\in\F^{r\times |V|}\) over the same field.
Write \(a_i(v)\) for the column of \(A_i\) indexed by \(v\).
Their direct sum on disjoint copies of \(V\) is represented by
the block diagonal matrix \(\operatorname{diag}(A_1,\ldots,A_d)\).

Let \(\mathcal F\subseteq V^d\) be an explicitly listed family.
For each tuple \(F=(v_1,\ldots,v_d)\in\mathcal F\), form the vector
\[
\Phi(F):=a_1(v_1)\otimes\cdots\otimes a_d(v_d)
\in (\F^r)^{\otimes d}\cong\F^{r^d}.
\]
Its coordinate indexed by \((j_1,\ldots,j_d)\in\{1,\ldots,r\}^d\)
is \(\prod_{i=1}^d a_i(v_i)_{j_i}\).
These are exactly the possibly nonzero \(d\times d\) minors of
the columns corresponding to \(F\) in the block diagonal matrix:
a nonzero minor must select one row from each of the $d$ row blocks.

Form the matrix whose columns are the vectors \(\Phi(F)\), and
use Gaussian elimination to select a basis of its column space.
Retain the corresponding tuples as \(\widehat{\mathcal F}\).
Since the tensor space has dimension \(r^d\), we have
\(
|\widehat{\mathcal F}|\leq r^d.
\)
Computing a column basis
 takes \(O(|\mathcal F|\cdot r^{2d})\) field operations
 by Gaussian elimination.

To prove that \(\widehat{\mathcal F}\) is a representative family for
\(\mathcal F\), fix independent sets \(Y_i\) in \(M_i\), and let
\[
L_i:=\operatorname{span}\{a_i(y):y\in Y_i\},\qquad
\pi_i:\F^r\longrightarrow\F^r/L_i.
\]
Here \(\pi_i\) is the quotient map.
For any \(v\in V\), the vector \(\pi_i(a_i(v))\) is nonzero
exactly when \(a_i(v)\notin L_i\), which is equivalent to
\(\{v\}\) extending \(Y_i\) in \(M_i\).
The tensor map \(\pi:=\bigotimes_{i=1}^d\pi_i\) satisfies
\[
\pi(\Phi(F))
=\pi_1(a_1(v_1))\otimes\cdots\otimes\pi_d(a_d(v_d)).
\]
This tensor is nonzero exactly when \(\pi_i(a_i(v_i))\neq0\) for every \(i\).
Thus \(\pi(\Phi(F))\neq0\) exactly when \(F\) extends the set
given by \(Y_1,\ldots,Y_d\) in the direct sum.
Suppose some \(F\in\mathcal F\) extends this set.
Since \(\Phi(F)\) is a linear combination of the vectors \(\Phi(F')\)
with \(F'\in\widehat{\mathcal F}\), its nonzero image under \(\pi\)
implies that \(\pi(\Phi(F'))\neq0\) for some \(F'\in\widehat{\mathcal F}\).
This tuple \(F'\) extends the same set,
proving that \(\widehat{\mathcal F}\) is representative.

\subsection{From representative families to cut covering}
\label{subsec:randomized-cut-covering}

We describe the cut-covering argument of Kratsch and Wahlstr\"om
\cite{KratschWahlstrom2020}, which is also presented in the
kernelization textbook \cite{FominEtAl2019}.
Let \(D\) be a directed graph with nonempty disjoint terminal sets
\(S,T\), where the vertices in \(S\) have no incoming arcs and those
in \(T\) have no outgoing arcs, and let \(r:=\max\{|S|,|T|\}\).
To enforce the arc conditions, add a fresh source \(s^+\) with an arc
\(s^+\to s\) for each \(s\in S\) and a fresh sink \(t^-\) with an arc
\(t\to t^-\) for each \(t\in T\), and use the copies as terminals.
This preserves minimum cut values, and mapping each copy back to its
original vertex does not increase the size of a cut-covering set.

We seek a small vertex set containing a minimum \((A,B)\)-cut for
every \(A\subseteq S\) and \(B\subseteq T\).
A nonterminal vertex is \emph{essential} for \((A,B)\)-cuts if it
belongs to every minimum \((A,B)\)-cut.

The connection to gammoids uses \emph{closest sets}.
A set \(X\) is closest to a source set \(S\) if it is the unique
minimum \((S,X)\)-cut.
Add a sink copy \(x'\) of each \(x\in X\), with the same
in-neighbors and no outgoing arcs, and let \(M\) be the gammoid
of the resulting graph with sources \(S\).
The key observation of \cite{KratschWahlstrom2020} is that \(X\) is
closest to \(S\) exactly when \(X\cup\{x'\}\) is independent in \(M\) for every
\(x\in X\setminus S\).
Such an independent set corresponds to paths ending at all vertices
of \(X\), with two paths ending at \(x\), disjoint except for this
shared endpoint. A cut avoiding \(x\) needs at least \(|X|+1\)
vertices to meet all these paths. This explains why the extra sink
copy detects whether a minimum cut can avoid \(x\).

Construct \(D_1\) by adding a sink copy of every vertex of \(D\),
and construct \(D_2\) by first reversing all arcs and then adding
sink copies. Let \(M_1\) and \(M_2\) be the gammoids on these
graphs with sources \(S\) and \(T\), respectively, and let \(M_0\)
be the rank-\(r\) uniform matroid on \(V(D)\).
For each \(v\in V(D)\setminus(S\cup T)\), define
\(F_v:=(v_0,v'_1,v'_2)\) in the direct sum
\(M_0\oplus M_1\oplus M_2\), where \(v_0\) is the copy of \(v\)
in \(M_0\) and \(v'_i\) is its sink copy in \(M_i\).
Let \(\mathcal F:=\{F_v:v\in V(D)\setminus(S\cup T)\}\) be the
family of all such triples.
All three matroids have rank at most \(r\), so the tensor construction
gives a representative family
\(\widehat{\mathcal F}\subseteq\mathcal F\) for \(\mathcal F\)
of size at most \(r^3\).

For \(A\subseteq S\) and \(B\subseteq T\), let \(C_A\) and \(C_B\)
be the minimum \((A,B)\)-cuts closest to \(A\) in \(D\) and to \(B\)
in the reversed graph, respectively.
A nonterminal \(u\) is essential for \((A,B)\)-cuts exactly when
\(u\in C_A\cap C_B\) \cite{KratschWahlstrom2020}.
Fix an essential vertex \(v\) and set
\[
Y_0:=C_A\setminus\{v\},\qquad
Y_1:=C_A\cup(S\setminus A),\qquad
Y_2:=C_B\cup(T\setminus B).
\]
Using the key observation,  a nonterminal
\(u\) is essential exactly when \(u'_1,u'_2\) extend \(Y_1,Y_2\), respectively.
Since every essential \(u\) belongs to \(Y_0\) except $v$,
\(F_v\) is the unique triple extending \(Y_0,Y_1,Y_2\).
Consequently,
\(F_v\in\widehat{\mathcal F}\).

If \(F_v\notin\widehat{\mathcal F}\), then \(v\) is \emph{irrelevant}:
every pair \(A,B\) has a minimum cut avoiding \(v\).
We may therefore \emph{bypass} \(v\): add an arc \(a\to b\)
for each two-arc path \(a\to v\to b\), then delete \(v\).
This operation preserves every minimum cut value.
We repeat
until
at most \(r^3+|S|+|T|\) vertices remain.

\section{Almost multilinear representations}\label{sec:blocks}

As described in \Cref{sec:randomized-approach}, linear representations
of transversal matroids and gammoids can be constructed in randomized
polynomial time, but no deterministic polynomial-time construction is known.
In this section, we deterministically construct \emph{almost multilinear
representations} for both classes. These assign a block of columns to
each element and approximate matroid ranks after normalization.
This notion was recently introduced by K\"uhne and
Yashfe~\cite{KuhneYashfe2025}.

\begin{definition}[{\cite[Definition~2.4]{KuhneYashfe2025}}]\label{def:almost-multilinear}
Let \(M\) be a matroid on \(V\), and let \(\delta\geq0\).
A \emph{\(\delta\)-almost multilinear representation} of \(M\)
over a field \(\F\) consists of an integer \(t\), called
its \emph{scale}, and a matrix \(A\in\F^{m\times t|V|}\)
for some integer \(m\geq1\). The columns of \(A\) are partitioned
into blocks \(A[e]\) of \(t\) columns indexed by \(e\in V\).
For \(F\subseteq V\), let \(A[F]\) be the concatenation of the
blocks indexed by \(F\). We require that
\begin{equation}\label{eq:almost-multilinear}
\left|\frac1t\operatorname{rank}A[F]-r_M(F)\right|\leq\delta
\qquad\text{for every }F\subseteq V.
\end{equation}
\end{definition}

For any rational \(\delta>0\), our constructions run in time polynomial
in the input size and \(1/\delta\).
We first establish the required polynomial lemmas in
\Cref{subsec:polynomial-multiplicities}, then construct representations
of transversal matroids in \Cref{subsec:transversal-matroids}.
Finally, we obtain representations of gammoids by dualizing and
restricting in \Cref{subsec:gammoids}.

\subsection{Polynomial independence and multiplicity bounds}\label{subsec:polynomial-multiplicities}

The following lemmas use the Wronskian method for explicit subspace
designs \cite{GuruswamiKopparty2016} in the form developed by
Kopparty and Saraf \cite{KoppartySaraf2026}.
For background on the Wronskian criterion for linear independence,
see Bostan and Dumas \cite{BostanDumas2010}.

Let \(p\) be a prime and \(\F=\F_p\).
For \(g\in\F[Z]\), write \([Z^h]g(Z)\) for the coefficient
of \(Z^h\) in \(g(Z)\).
For a nonzero polynomial \(f\in\F[X]\) and \(\alpha\in\F\),
let \(\operatorname{mult}(f,\alpha)\) be the largest integer \(a\)
such that \((X-\alpha)^a\) divides \(f\). Write \(f^{(h)}\)
for the \(h\)-th formal derivative of \(f\). The \emph{Wronskian}
of \(f_1,\ldots,f_s\) is the polynomial
\(\operatorname{Wr}(f_1,\ldots,f_s):=
\det(f_i^{(j-1)})_{i,j=1}^s\).

\begin{lemma}[See e.g., {\cite[Theorem~2]{BostanDumas2010}}]\label{lem:wronskian}
Let \(s\geq1\), and let \(f_1,\ldots,f_s\in\F[X]\) have
degree at most \(d<p\). Then \(f_1,\ldots,f_s\) are linearly
independent over \(\F\) if and only if
\(\operatorname{Wr}(f_1,\ldots,f_s)\ne0\).
\end{lemma}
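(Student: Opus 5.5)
Both directions of \Cref{lem:wronskian} come from comparing the Wronskian with the coefficient matrix of \(f_1,\ldots,f_s\). The hypothesis \(d<p\) is what makes the derivatives behave as in characteristic zero on polynomials of degree at most \(d\).

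\emph{Dependent implies vanishing.} Suppose \(\sum_i c_if_i=0\) with \(c\in\F^s\setminus\{0\}\). Differentiating \(j-1\) times gives \(\sum_i c_if_i^{(j-1)}=0\) for every \(j\). Hence \(c^\top\) annihilates the matrix \((f_i^{(j-1)})_{i,j}\), whose rows are indexed by \(i\). Its determinant is therefore zero in \(\F[X]\). This direction uses nothing about \(p\).

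\emph{Independent implies nonvanishing.} Here I would apply a triangular change of basis. Row operations over \(\F\) on the Wronskian matrix replace \(f_i\) by \(\F\)-linear combinations, and derivatives commute with such combinations, so the Wronskian changes only by a nonzero scalar factor (the determinant of the change of basis). Since the \(f_i\) are independent, Gaussian elimination on their coefficient vectors yields an equivalent basis \(g_1,\ldots,g_s\) in echelon form. These have distinct degrees \(e_1>e_2>\cdots>e_s\), all at most \(d\), and can be taken monic. I would then show that \(\operatorname{Wr}(g_1,\ldots,g_s)\) has degree exactly \(\sum_i e_i-\binom{s}{2}\) with nonzero leading coefficient.

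Concretely, the leading term of \(g_i^{(j-1)}\) is \(e_i(e_i-1)\cdots(e_i-j+2)\,X^{e_i-j+1}\). Every term in the expansion of the determinant has degree at most \(\sum_i e_i-\sum_j(j-1)\), so the coefficient of \(X^{\sum e_i-\binom s2}\) is the determinant of the matrix of falling factorials \(\bigl((e_i)_{j-1}\bigr)_{i,j}\). Column operations turn falling factorials into ordinary powers, so this matrix is equivalent to the Vandermonde matrix \((e_i^{\,j-1})\), and its determinant is \(\prod_{i<k}(e_k-e_i)\).

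The main obstacle is the characteristic: this product must be nonzero in \(\F_p\). It is, because \(0\leq e_i\leq d<p\), so the \(e_i\) are distinct residues modulo \(p\). The falling-factorial-to-power column operations are unitriangular with integer entries, so they remain valid modulo \(p\). One caveat: if some \(e_i<j-1\), the falling factorial entry \((e_i)_{j-1}\) is zero, which is consistent with \(g_i^{(j-1)}\) having no term of that degree, so no separate case is needed.

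Thus the Wronskian is nonzero, which completes the equivalence.
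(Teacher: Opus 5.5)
Your proof is correct and complete. The paper gives no proof of this lemma; it only cites Bostan and Dumas, so your argument supplies what the paper leaves to the reference.

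Your route is the standard one. You first make an invertible change of basis to monic polynomials of distinct degrees \(e_1>\cdots>e_s\), which multiplies the Wronskian by a nonzero scalar. You then identify the coefficient of \(X^{\sum_i e_i-\binom s2}\) with \(\det\bigl((e_i)_{j-1}\bigr)=\prod_{i<k}(e_k-e_i)\).

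This route isolates exactly where \(d<p\) is used. The \(e_i\) are distinct integers in \([0,d]\), so each factor \(e_k-e_i\) is a nonzero integer of absolute value at most \(d<p\). The dependent-implies-vanishing direction needs no assumption on \(p\), as you note. Your treatment of entries with \(j-1>e_i\) is also right: there the derivative and the falling factorial both vanish, so the coefficient identity holds with no case split.

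The same argument run with lowest-order terms (valuations) instead of leading terms gives the power-series form of the criterion. For the polynomial statement the paper uses, either version suffices.
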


For completeness, we also include short proofs of the following two
lemmas from \cite{KoppartySaraf2026}.

\begin{lemma}[{\cite[Lemma~2.2]{KoppartySaraf2026}}]\label{lem:multiplicity}
Let \(W\subseteq\F[X]\) be an \(s\)-dimensional space of
polynomials of degree at most \(d<p\), where \(s\geq1\).
For \(\alpha\in\F\), let
\(\nu_W(\alpha):=\max_{f\in W\setminus\{0\}}
\operatorname{mult}(f,\alpha)\).
Then, for every finite set \(\Gamma\subseteq\F\),
\begin{equation}\label{eq:multiplicity}
\sum_{\alpha\in\Gamma}
\max\{\nu_W(\alpha)-s+1,0\}\leq sd.
\end{equation}
\end{lemma}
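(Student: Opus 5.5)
The plan is to follow the Wronskian argument. First choose a basis \(f_1,\ldots,f_s\) of \(W\) adapted to \(\alpha\), and then compare the total root multiplicity of the Wronskian against its degree.

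\textbf{Step 1 (global degree bound).} By \Cref{lem:wronskian}, \(\operatorname{Wr}:=\operatorname{Wr}(f_1,\ldots,f_s)\) is a nonzero polynomial for any basis \(f_1,\ldots,f_s\) of \(W\). Its \((i,j)\) entry \(f_i^{(j-1)}\) has degree at most \(d-j+1\), so every term of the determinant expansion has degree at most \(\sum_{j=1}^s(d-j+1)\leq sd\). Hence \(\deg\operatorname{Wr}\leq sd\). Changing the basis multiplies \(\operatorname{Wr}\) by a nonzero constant, namely the determinant of the change-of-basis matrix. So \(\operatorname{Wr}\) is determined up to scaling by \(W\), and so is \(\operatorname{mult}(\operatorname{Wr},\alpha)\).

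\textbf{Step 2 (local multiplicity bound).} Fix \(\alpha\in\Gamma\) with \(\nu:=\nu_W(\alpha)\geq s\); other \(\alpha\) contribute nothing. I will show \(\operatorname{mult}(\operatorname{Wr},\alpha)\geq\nu-s+1\). Choose a basis in which \(f_1\) attains multiplicity \(\nu\) at \(\alpha\), and extend it arbitrarily to a basis \(f_1,\ldots,f_s\). Since \(\nu\leq d<p\), the derivatives \(f_1^{(j-1)}\) are divisible by \((X-\alpha)^{\nu-j+1}\) for \(j\leq s\). (The formal derivative lowers multiplicity by exactly one when the multiplicity is below \(p\); here we only need that it lowers it by at most one.) Therefore every entry of row \(1\) of the Wronskian matrix is divisible by \((X-\alpha)^{\nu-s+1}\). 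Expanding the determinant along row \(1\) gives \((X-\alpha)^{\nu-s+1}\mid\operatorname{Wr}\).

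A sharper version would use a basis with distinct multiplicities \(m_1>\cdots>m_s\), obtained by Gaussian elimination on Taylor coefficients at \(\alpha\). This yields multiplicity \(\sum_i m_i-\binom{s}{2}\), but the crude bound above suffices for \eqref{eq:multiplicity}.

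\textbf{Step 3 (combine).} The elements of \(\Gamma\) are distinct, so the factors \((X-\alpha)^{\max\{\nu_W(\alpha)-s+1,0\}}\) are pairwise coprime. All of them divide the nonzero polynomial \(\operatorname{Wr}\), and each such bound is valid because multiplicities at \(\alpha\) do not depend on the choice of basis (Step 1). Therefore
\[
\sum_{\alpha\in\Gamma}\max\{\nu_W(\alpha)-s+1,0\}\leq\deg\operatorname{Wr}\leq sd.
\]

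\textbf{Main obstacle.} The main subtlety is the characteristic-\(p\) bookkeeping. Derivatives must behave, which uses \(d<p\) both for the Wronskian criterion and for the statement that differentiation drops multiplicity by at most one. One must also check that the degree bound \(sd\) survives the formal derivatives. The degree bound holds even when some \(f_i^{(j-1)}\) vanish, since zero entries only lower the degree. So this step is routine.
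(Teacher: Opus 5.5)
Your proposal is correct and is essentially the paper's proof. Both use the same ingredients: a nonzero Wronskian of degree at most \(sd\), its invariance up to a nonzero scalar under change of basis, and a basis whose first element attains \(\nu_W(\alpha)\), so that this row carries the common factor \((X-\alpha)^{\nu_W(\alpha)-s+1}\). The paper then sums these multiplicities over \(\Gamma\) against the degree bound, just as you do; your sharper entrywise degree count and the aside about distinct multiplicities are unnecessary but harmless.
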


\begin{proof}
Choose a basis \(f_1,\ldots,f_s\) of the polynomial space \(W\),
and let \(R:=\operatorname{Wr}(f_1,\ldots,f_s)\).
By \Cref{lem:wronskian}, \(R\ne0\). Each term in the determinant
defining \(R\) is a product of \(s\) polynomials of degree at most
\(d\), so \(\deg R\leq sd\).

Fix \(\alpha\in\Gamma\), and choose \(f\in W\setminus\{0\}\)
attaining \(\nu_W(\alpha)\). Extend \(f\) to a basis of \(W\).
The corresponding Wronskian is a nonzero scalar multiple of \(R\),
because a change of basis multiplies the Wronskian by the determinant
of an invertible matrix over \(\F\).
For \(0\leq h<s\), the derivative \(f^{(h)}\) is divisible by
\((X-\alpha)^{\max\{\nu_W(\alpha)-h,0\}}\).
Thus every entry in the row corresponding to \(f\) has the common
factor \((X-\alpha)^{\max\{\nu_W(\alpha)-s+1,0\}}\).
This factor divides the determinant and hence \(R\).
Since a nonzero polynomial has at most its degree many roots,
counting multiplicities, we obtain
\[
\sum_{\alpha\in\Gamma}\max\{\nu_W(\alpha)-s+1,0\}
\leq\sum_{\alpha\in\Gamma}\operatorname{mult}(R,\alpha)
\leq\deg R\leq sd.\qedhere
\]
\end{proof}

For an integer \(t\geq1\), write \(\F[X]_{<t}\) for the
polynomials of degree less than \(t\). The next lemma gives an independence condition for
polynomials with prescribed zeros.

\begin{lemma}[{\cite[Lemma~2.3]{KoppartySaraf2026}}]\label{lem:polynomial-independence}
Let \(n,t,K\geq1\) be integers satisfying
\(K>n(t+n-1)\) and \(K+t-1<p\).
Let \(\alpha_1,\ldots,\alpha_n\in\F\) be pairwise distinct.
For each \(i=1,\ldots,n\), choose a nonzero polynomial
\(P_i\in\{(X-\alpha_i)^K Q\mid Q\in\F[X]_{<t}\}\).
Then \(P_1,\ldots,P_n\) are linearly independent over \(\F\).
\end{lemma}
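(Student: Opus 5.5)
We argue by contradiction and apply \Cref{lem:multiplicity}. Suppose some nontrivial relation \(\sum_i c_iP_i=0\) holds, and let \(I:=\{i:c_i\neq0\}\), so \(s_0:=|I|\geq2\); the case \(s_0=1\) is impossible because each \(P_i\) is nonzero. Let \(W:=\operatorname{span}\{P_i:i\in I\}\). The relation gives \(s:=\dim W\leq s_0-1\leq n-1\). The degrees are at most \(d:=K+t-1<p\), so \Cref{lem:multiplicity} applies to \(W\) with \(\Gamma:=\{\alpha_i:i\in I\}\).

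The key step is to lower-bound \(\nu_W(\alpha_i)\) for each \(i\in I\). Because \(P_i\) is a linear combination of the other \(P_j\) with \(j\in I\), the space \(W\) is already spanned by \(\{P_j:j\in I\setminus\{i\}\}\). I would rather use the direct bound \(\nu_W(\alpha_i)\geq\operatorname{mult}(P_i,\alpha_i)\geq K\), since \(P_i\in W\) is nonzero. Then every \(\alpha\in\Gamma\) contributes at least \(K-s+1\geq K-n+2>0\) to the left-hand side of \eqref{eq:multiplicity}, and the \(\alpha_i\) are distinct. Summing gives
\[
s_0(K-s+1)\leq sd=s(K+t-1).
\]

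It remains to derive a contradiction from this inequality using \(s\leq s_0-1\) and \(K>n(t+n-1)\). Substituting \(s_0\geq s+1\) gives \((s+1)(K-s+1)\leq s(K+t-1)\), which simplifies to \(K\leq s(t+s-1)+s^2-1\)-type quantities; more carefully, it reads \(K+(s+1)(1-s)\leq s(t-1)\), that is, \(K\leq s(t-1)+s^2-1<n(t+n-1)\) since \(s\leq n-1\). This contradicts the hypothesis on \(K\). The main thing to check is this arithmetic: the exact form of the bound (\(n(t+n-1)\)) leaves slack, so the crude estimate \(\nu_W\geq K\) should suffice, and the Wronskian bound does all the work. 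If the slack turned out insufficient, I would instead bound \(\nu_W(\alpha_i)\) using a combination of \(P_i\) with another \(P_j\), but I expect that not to be needed.
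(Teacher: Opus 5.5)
Your argument is correct and is essentially the paper's proof. Both apply \Cref{lem:multiplicity} to the span \(W\), using \(\nu_W(\alpha_i)\geq K\) at at least \(s+1\) distinct points, which yields \((s+1)(K-s+1)\leq s(K+t-1)\), i.e.\ \(K\leq s(t+s-1)-1<n(t+n-1)\). The differences are cosmetic: you restrict to the support of the dependency and use all \(s_0\geq s+1\) of its points, whereas the paper takes the span of all \(P_i\) and any \(s+1\) of the \(\alpha_i\). You should also delete the garbled intermediate expression \(s(t+s-1)+s^2-1\), since the correct bound is the \(s(t-1)+s^2-1\) you derive immediately afterwards.
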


\begin{proof}
Suppose, for a contradiction, that the polynomials are linearly
dependent. Let \(W:=\operatorname{span}\{P_1,\ldots,P_n\}\) be the
space of all their \(\F\)-linear combinations, and let \(s:=\dim W < n\).
Since the polynomials are nonzero and dependent, we have
\(1\leq s<n\). Every polynomial in \(W\) has degree at most
\(K+t-1<p\).

Set \(\Gamma:=\{\alpha_1,\ldots,\alpha_{s+1}\}\).
For each \(i=1,\ldots,s+1\), the nonzero polynomial \(P_i\in W\)
is divisible by \((X-\alpha_i)^K\), so \(\nu_W(\alpha_i)\geq K\).
Since every polynomial in \(W\) has degree at most
\(K+t-1<p\), applying \Cref{lem:multiplicity} to \(W\) at these \(s+1\) points gives
\[
(s+1)(K-s+1)
\leq s(K+t-1).
\]
We thus obtain
\(K\leq s(t+s-1)-1\), contradicting the choice of \(K\).
\end{proof}

\subsection{Transversal matroids}\label{subsec:transversal-matroids}

We adapt the proof of \cite[Theorem~1.1]{KoppartySaraf2026} to
obtain an almost multilinear representation.

\transversalrepresentation*
\begin{proof}
Let \(T\) be a transversal matroid on a bipartite graph \(G\)
with bipartition \((V,W)\) and ground set \(V\). Let
\(n:=|V|+|W|\).

\paragraph{Construction}
Let
\(t:=\lceil2n^2/\delta\rceil\), and let
\(\ell:=t+2n\) and \(K:=n(t+n)+1\).
Let $p$ be a prime with
\(K+t-1<p<2(K+t-1)\).
We assign the distinct elements
\(1,\ldots,n\in\F_p\) as evaluation points
\(\alpha_v\), \(v\in V\), and \(\beta_w\), \(w\in W\).
For each \(v\in V\), define the linear space
\[
U_v:=\{(X-\alpha_v)^K Q\mid Q\in\F_p[X]_{<t}\}.
\]
The polynomials
\((X-\alpha_v)^K,X(X-\alpha_v)^K,\ldots,
X^{t-1}(X-\alpha_v)^K\) form a basis of \(U_v\).

We construct the matrix blockwise. For each edge \(vw\in E(G)\),
form an \(\ell\times t\) block whose columns correspond to the
above basis polynomials of \(U_v\). To obtain column \(a\),
substitute \(X=\beta_w+Z\) in \(X^a(X-\alpha_v)^K\) and
expand in \(Z\).
The coefficient of \(Z^h\) becomes the entry in row \(h\),
for \(0\leq h<\ell\). For \(vw\notin E(G)\), use the zero
block. Formally, define a matrix
\(C\in\F_p^{\ell|W|\times t|V|}\)
with rows \((w,h)\), \(w\in W\) and
\(0\leq h<\ell\), and columns \((v,a)\), \(v\in V\) and
\(0\leq a<t\), by
\[
C_{(w,h),(v,a)}:=
\begin{cases}
[Z^h]\bigl((\beta_w-\alpha_v+Z)^K(\beta_w+Z)^a\bigr),
  &vw\in E(G),\\
0,&\text{otherwise}.
\end{cases}
\]
For \(v\in V\), let \(C[v]\) be the block of \(t\) columns
indexed by \(v\), and let \(S_T(v)\) be its column space.
For \(F\subseteq V\), write \(S_T(F):=\sum_{v\in F}S_T(v)\)
and \(\rho_T(F):=\dim S_T(F)\).

Clearly, the construction runs in polynomial time.

\paragraph{Correctness}
It suffices to show that
\(t\,r_T(F)\leq\rho_T(F)\leq(t+2n)r_T(F)\) for every
\(F\subseteq V\). Indeed, these bounds imply
\(0\leq\rho_T(F)/t-r_T(F)\leq2n^2/t\leq\delta\).

\Needspace{6\baselineskip}
\begin{claim}\label{clm:transversal-lower-bound}
For every \(F\subseteq V\), we have
\(\rho_T(F)\geq t\,r_T(F)\).
\end{claim}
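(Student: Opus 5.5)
Let \(I\subseteq F\) be a maximum independent subset of \(F\) in \(T\), so \(|I|=r_T(F)\), and fix a matching \(\mu\) covering \(I\), writing \(w_v:=\mu(v)\in W\) for \(v\in I\). Since \(\rho_T(F)\geq\rho_T(I)\), it suffices to show \(\dim S_T(I)=t|I|\). Equivalently, the \(t|I|\) columns of \(C[I]\) are linearly independent. So suppose \(\sum_{v\in I}C[v]q_v=0\) for coefficient vectors \(q_v\in\F_p^t\), and let \(Q_v\in\F_p[X]_{<t}\) have coefficient vector \(q_v\). Set \(P_v:=(X-\alpha_v)^KQ_v\in U_v\). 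We must show every \(Q_v=0\).

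Next, interpret the rows. For fixed \(w\), the rows \((w,h)\), \(0\leq h<\ell\), of column \((v,a)\) are the first \(\ell\) Taylor coefficients at \(\beta_w\) of \(X^a(X-\alpha_v)^K\) if \(vw\in E(G)\), and \(0\) otherwise. By linearity, the dependence says that for every \(w\in W\),
\[
\sum_{v\in I:\,vw\in E(G)}P_v\equiv0\pmod{(X-\beta_w)^\ell}.
\]

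The main step is a triangularity argument along the matching. This is the step I expect to be delicate, since the edge structure means the congruence at \(\beta_w\) involves all neighbours of \(w\) in \(I\), not just \(\mu^{-1}(w)\). My first attempt would be a leading-term/ordering argument. A cleaner route is to reduce to \Cref{lem:polynomial-independence} via the Kopparty--Saraf device: the statement that the block matrix restricted to rows of \(\mu(I)\) has full column rank can be written as a determinant. Its expansion over matchings is a polynomial in formal edge weights, and the term for \(\mu\) is isolated.

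Concretely, I would introduce a scalar variable \(y_{vw}\) per edge and consider the matrix with blocks \(y_{vw}\cdot C_{w,v}\), where \(C_{w,v}\) is the \(\ell\times t\) Taylor block. I would restrict to rows \(w\in\mu(I)\) and to a suitable choice of \(t\) rows per \(w\), giving a square \(t|I|\times t|I|\) matrix. Its determinant expands over bijections \(I\to\mu(I)\) along edges; by block Laplace-type expansion, the coefficient of \(\prod_v y_{v\mu(v)}^t\) is \(\prod_v\det C'_{\mu(v),v}\). Each such factor is nonzero: \(C_{\mu(v),v}\) has full column rank \(t\) because distinct polynomials of degree \(<K+t\leq\ell+\ldots\) are determined by enough Taylor coefficients. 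A \(t\)-subset of rows can then be chosen to make it invertible.

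This only shows that generic \(y\) works, whereas we need \(y\equiv1\), so I would replace the variables with the actual structure. The intended mechanism is the multiplicity bound. Suppose some \(Q_v\neq0\), and take \(W'=\operatorname{span}\{P_v\}\). At each \(\beta_{\mu(v)}\), some nonzero element of the span of the neighbouring \(P_u\) vanishes to order \(\ell\). Hence \(\nu\geq\ell=t+2n\) at \(|I|\) points, which are distinct from the \(\alpha\)'s. Together with \(\nu\geq K\) at the \(\alpha_v\), applying \Cref{lem:multiplicity} to a minimal dependent subcollection of dimension \(s\leq n\) gives
\[
\sum\bigl(\nu-s+1\bigr)\leq s(K+t-1).
\]
Counting \(s+1\) points at the \(\alpha\)'s with excess \(K-s+1\), plus the \(\beta\)-points, contradicts this by the choice of \(K\) and \(\ell\). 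So \(t|I|\) columns are independent, and \(\rho_T(F)\geq t\,r_T(F)\).
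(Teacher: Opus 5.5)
Your setup is correct and matches the paper. It suffices to show that the \(t|I|\) columns of \(C[I]\) are independent for an independent \(I\), and the column relation says exactly that \(\sum_{v\in I:\,vw\in E(G)}P_v\) vanishes to order \(\ell\) at \(\beta_w\) for every \(w\in W\).

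The gap is the step ``at each \(\beta_{\mu(v)}\), some nonzero element of the span of the neighbouring \(P_u\) vanishes to order \(\ell\).'' The relation hands you one specific element, \(Q_w:=\sum_{u\in I:\,uw\in E(G)}P_u\), with all coefficients equal to \(1\). Nothing you have said prevents it from being zero. You only assumed that some \(Q_v\neq0\), so every \(P_u\) adjacent to \(w\) may vanish, or these \(P_u\) may cancel.

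Two ingredients are needed here, and this is where the paper places them:
\begin{itemize}
\item Pass to the support \(I_0:=\{v\in I:P_v\neq0\}\). The paper instead takes \(I\) inclusion-minimal among independent sets with dependent columns, which amounts to the same thing. The relation still holds over \(I_0\), and \(\mu\) restricted to \(I_0\) is still a matching covering it.
\item Apply \Cref{lem:polynomial-independence}, which is valid since \(K=n(t+n)+1>n(t+n-1)\) and \(K+t-1<p\). It shows that the \(P_v\), \(v\in I_0\), are linearly independent. Hence \(Q_{\mu(v)}\) contains \(P_v\) with coefficient \(1\) and is nonzero for every \(v\in I_0\).
\end{itemize}
You invoke \Cref{lem:polynomial-independence} only in the generic-edge-weight detour. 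You rightly abandon that detour, and it can be dropped entirely.

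The final count is also set up incorrectly. The column relation does not give a linear dependence among the \(P_v\); it gives only local vanishing of neighbourhood sums at the \(\beta_w\). By the previous step, the nonzero \(P_v\) are in fact independent. So there is no ``minimal dependent subcollection'' and no set of \(s+1\) points \(\alpha_v\) in an \(s\)-dimensional space. With \(s+1\) such points you would simply be repeating the proof of \Cref{lem:polynomial-independence}, which never uses the column relation.

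The correct application of \Cref{lem:multiplicity} is to \(\mathcal P:=\operatorname{span}\{P_v:v\in I_0\}\). This space has dimension \(s:=|I_0|\geq1\) and degree at most \(K+t-1<p\). Use the \(2s\) distinct points \(\alpha_v\) and \(\beta_{\mu(v)}\), \(v\in I_0\):
\[
s(K-s+1)+s(\ell-s+1)\leq s(K+t-1).
\]
This gives \(\ell\leq t+2s-3\), contradicting \(\ell=t+2n\). The contradiction is carried by the \(\beta\)-points, that is, by the matching. With these two repairs your argument becomes the paper's proof.
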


\begin{proof}
Suppose, for a contradiction, that \(I\subseteq V\) is
inclusion-minimal among independent sets of \(T\) whose
corresponding columns in \(C\) are dependent. Choose coefficients
\(c_{v,a}\in\F_p\), \(v\in I\) and \(0\leq a<t\), not all
zero, such that for each \(w\in W\) and \(0\leq h<\ell\),
\[
\sum_{v\in I}\sum_{a=0}^{t-1}c_{v,a}C_{(w,h),(v,a)} = 0.
\]
For each \(v\in I\), define
\[
P_v(X):=(X-\alpha_v)^K\sum_{a=0}^{t-1}c_{v,a}X^a\in U_v.
\]
Each \(P_v\) is nonzero by minimality of \(I\).
Our goal is to derive a contradiction using \Cref{lem:multiplicity}.
Let \(s:=|I|\) and
\(\mathcal P:=\operatorname{span}\{P_v:v\in I\}\).
At each of the \(s\) points \(\alpha_v\), the polynomial
\(P_v\) has multiplicity at least \(K\), so
\(\nu_{\mathcal P}(\alpha_v)\geq K\).
Fix a matching covering \(I\). We will use the column dependence
to show that \(\nu_{\mathcal P}(\beta_w)\geq\ell\) at each of
its \(s\) matched right vertices \(w\).
Once this is shown,
since every polynomial in \(\mathcal P\) has degree at most \(K+t-1\),
 applying \Cref{lem:multiplicity} at these
\(2s\) distinct points gives
\[
s(K-s+1)+s(\ell-s+1)\leq s(K+t-1).
\]
This implies \(\ell\leq t+2s-3\), contradicting \(s\leq n\)
and \(\ell=t+2n\).

It remains to show that \(\nu_{\mathcal P}(\beta_w)\geq\ell\)
for each matched right vertex \(w\). Fix such a vertex and let
\(Q_w:=\sum_{v\in I:\,vw\in E(G)}P_v\in\mathcal P\).
By \Cref{lem:polynomial-independence}, the polynomials \(P_v\)
are linearly independent. Since \(w\) has a neighbor in \(I\),
we have \(Q_w\ne0\).
For each \(0\leq h<\ell\), the definition of \(C\) and the
column dependence give
\[
[Z^h]Q_w(\beta_w+Z)
=\sum_{v\in I}\sum_{a=0}^{t-1}c_{v,a}C_{(w,h),(v,a)}=0.
\]
Thus \(Z^\ell\) divides \(Q_w(\beta_w+Z)\), so
\((X-\beta_w)^\ell\) divides \(Q_w(X)\). This proves
\(\nu_{\mathcal P}(\beta_w)\geq\ell\) and completes the
contradiction. Hence the columns indexed by any independent set
of \(T\) are independent. Taking a basis of \(F\) gives
\(\rho_T(F)\geq t\,r_T(F)\).
\end{proof}

\Needspace{6\baselineskip}
\begin{claim}\label{clm:transversal-upper-bound}
For every \(F\subseteq V\), we have
\(\rho_T(F)\leq(t+2n)r_T(F)\).
\end{claim}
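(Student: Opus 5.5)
By K\"onig's theorem, \(r_T(F)\) equals the minimum size of a vertex cover of the bipartite graph \(G[F\cup N(F)]\). So I would fix such a cover, written as \(F'\cup W'\) with \(F'\subseteq F\) and \(W'\subseteq W\) and \(|F'|+|W'|=r_T(F)\). The aim is to bound \(\rho_T(F)\) by \(t|F'|+\ell|W'|\). Since \(\ell=t+2n\ge t\), this gives \(\rho_T(F)\le(t+2n)(|F'|+|W'|)=(t+2n)r_T(F)\).

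Every vertex \(v\in F\setminus F'\) has all of its neighbours in \(W'\), because the edges at \(v\) must be covered by \(W'\). In the block \(C[v]\), the rows indexed by \((w,h)\) with \(w\notin N(v)\) are zero. Hence every column of \(C[v]\) for \(v\in F\setminus F'\) is supported on the rows \(\{(w,h):w\in W',\,0\le h<\ell\}\). The sum of these column spaces therefore lies in a coordinate subspace of dimension \(\ell|W'|\).

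The blocks \(C[v]\) for \(v\in F'\) contribute at most \(t\) dimensions each, so at most \(t|F'|\) in total. Subadditivity of the dimension of a sum of subspaces then gives
\[
\rho_T(F)\le\dim\sum_{v\in F'}S_T(v)+\dim\sum_{v\in F\setminus F'}S_T(v)\le t|F'|+\ell|W'|\le(t+2n)\,r_T(F).
\]

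There is no real obstacle here. The only point to check is that the K\"onig cover is taken in the subgraph induced by \(F\) and its neighbourhood, so that its size equals the maximum matching size among matchings covering subsets of \(F\), which is \(r_T(F)\).
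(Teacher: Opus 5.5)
Your proposal is correct and follows essentially the same route as the paper's proof. The paper likewise takes a minimum König cover \(V_0\cup W_0\) of the edges incident with \(F\), of size \(r_T(F)\), bounds the blocks of \(V_0\) by \(t|V_0|\), and notes that all remaining columns of \(F\) are supported on the \(\ell|W_0|\) rows indexed by \(W_0\).
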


\begin{proof}
Take sets \(V_0\subseteq F\) and \(W_0\subseteq W\) of minimum
total size such that \(V_0\cup W_0\) meets every edge of \(G\)
incident with \(F\). By K\"onig's theorem,
\(|V_0|+|W_0|=r_T(F)\). The columns indexed by \(V_0\) contribute
at most \(t|V_0|\) to \(\rho_T(F)\). All other columns indexed by
\(F\) are supported on the \(\ell|W_0|\) rows indexed by \(W_0\).
Thus \(\rho_T(F)\leq t|V_0|+\ell|W_0|
\leq(t+2n)r_T(F)\), as required.
\end{proof}

This concludes the proof.
\end{proof}

\subsection{Gammoids}\label{subsec:gammoids}

The dual construction from \Cref{subsec:randomized-representations}
extends to almost multilinear representations.

\begin{lemma}\label{lem:dual-blocks}
Let \(A\) be a \(\delta\)-almost multilinear representation of a
matroid \(M\) on \(V\) over a field \(\F\), with scale \(t\).
Given \(A\), one can deterministically compute a
\((2\delta)\)-almost multilinear representation \(H\) of \(M^*\)
over \(\F\), with the same scale, in polynomial time.
\end{lemma}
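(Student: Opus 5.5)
We mirror the exact dual construction of \Cref{subsec:randomized-representations}, applied blockwise. Let \(A\in\F^{m\times t|V|}\) be the given representation and \(R:=\operatorname{rank}A\). By Gaussian elimination, compute a matrix \(H\) whose row space is \(\ker A\subseteq\F^{t|V|}\), that is, the space of coefficient vectors \(x\) with \(Ax=0\). For instance, bring \(A\) to reduced row echelon form \(\begin{bmatrix}I_R&D\end{bmatrix}\) after permuting columns, set \(H:=\begin{bmatrix}-D^\top&I_{t|V|-R}\end{bmatrix}\), and undo the permutation. The columns of \(H\) are indexed by the same pairs (element, position in block), so \(H\) inherits the block partition with scale \(t\). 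All of this takes polynomial time.

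The key identity is the linear-algebra computation already carried out in \Cref{subsec:randomized-representations}, now for arbitrary column subsets. For a set of columns \(\Sigma\) with complement \(\bar\Sigma\), we have
\[
\operatorname{rank}H[\Sigma]=|\Sigma|-\operatorname{rank}A+\operatorname{rank}A[\bar\Sigma].
\]
This holds because the row space of \(H[\Sigma]\) is the projection of \(\ker A\) onto \(\Sigma\), and the kernel of that projection is isomorphic to \(\ker A[\bar\Sigma]\). We then apply rank-nullity. Taking \(\Sigma\) to be the union of the blocks indexed by \(F\), we get \(|\Sigma|=t|F|\) and \(\bar\Sigma\) equal to the union of the blocks indexed by \(V\setminus F\). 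Dividing by \(t\) gives
\[
\tfrac1t\operatorname{rank}H[F]=|F|-\tfrac1t\operatorname{rank}A[V]+\tfrac1t\operatorname{rank}A[V\setminus F].
\]

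We compare this with the dual rank formula \(r_{M^*}(F)=|F|-r_M(V)+r_M(V\setminus F)\). Subtracting, the error is bounded by
\[
\bigl|\tfrac1t\operatorname{rank}A[V]-r_M(V)\bigr|+\bigl|\tfrac1t\operatorname{rank}A[V\setminus F]-r_M(V\setminus F)\bigr|\le2\delta,
\]
using \eqref{eq:almost-multilinear} twice. This gives the \((2\delta)\)-almost multilinear property.

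The only point needing care is the definitional requirement \(m\geq1\). If \(\ker A=0\), the construction above produces a matrix with no rows, so we instead take \(H\) to be a single zero row, which leaves all ranks unchanged. No other step is an obstacle; the argument is the exact dual computation, with the two rank errors adding up to \(2\delta\).
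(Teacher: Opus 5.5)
Your proposal is correct and matches the paper's proof. You take \(H\) with row space \(\ker A\), apply the column-subset rank identity from the exact dual construction to the blocks indexed by \(F\), and bound the error by the two \(\delta\)-deviations at \(V\setminus F\) and \(V\). Your handling of the case \(\ker A=0\), which would otherwise violate the requirement \(m\geq1\), is a harmless detail the paper leaves implicit.
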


\begin{proof}
Compute a matrix \(H\) whose rows form a basis of \(\ker A\),
with the same column blocks as \(A\).
For \(F\subseteq V\), the rank calculation in
\Cref{subsec:randomized-representations} gives
\[
\begin{aligned}
\frac1t\operatorname{rank}H[F]-r_{M^*}(F)
&=\left(\frac1t\operatorname{rank}A[V\setminus F]-r_M(V\setminus F)\right)
-\left(\frac1t\operatorname{rank}A-r_M(V)\right).
\end{aligned}
\]
Each parenthesized term has absolute value at most \(\delta\),
so the error is at most \(2\delta\).
Gaussian elimination uses a polynomial number of field operations.
\end{proof}

Restricting an almost multilinear representation to a subset of the
ground set simply discards the other column blocks, preserving its
scale and error bound.

\gammoidrepresentation*

\section{Computing representative families}\label{sec:selection}

We now show how to compute representative families from almost
multilinear representations.

\begin{samepage}
\representativeselection*
\end{samepage}

\begin{proof}
For \(i\in\{1,\ldots,d\}\) and \(e\in V\), let
\(S_i(e)\) be the column space of the block \(A_i[e]\).
For \(X\subseteq V\), let \(S_i(X):=\sum_{e\in X}S_i(e)\) and
\(\mathcal V_i:=S_i(V)\). The representation guarantee gives
\(\dim\mathcal V_i\leq t(r+\delta)\).
For each tuple \(F=(v_1,\ldots,v_d)\in\mathcal F\), form the
matrix \(B_F:=\bigotimes_{i=1}^d A_i[v_i]\).
Its column space is \(T_F:=\bigotimes_{i=1}^d S_i(v_i)\),
contained in \(\bigotimes_{i=1}^d\mathcal V_i\).

\paragraph{Algorithm}
Initialize \(\widehat{\mathcal F}:=\varnothing\) and \(L:=\{0\}\).
Process the tuples of \(\mathcal F\) in their input order.
If \(\dim(L+T_F)-\dim L\geq t^d/2\), add \(F\) to
\(\widehat{\mathcal F}\) and replace \(L\) by \(L+T_F\).
Thus \(L\) is always the sum of the subspaces of the retained tuples.

Each insertion increases its dimension by at least \(t^d/2\),
whereas \(\dim L\leq t^d(r+\delta)^d\). Consequently,
\(|\widehat{\mathcal F}|\leq2(r+\delta)^d\).
Since \(d\delta/r\leq1/8\), the binomial expansion gives
\[
(r+\delta)^d-r^d
\leq\frac{d\delta r^{d-1}}{1-d\delta/r}
\leq2d\delta r^{d-1}<\frac12.
\]
Thus \(|\widehat{\mathcal F}|<2r^d+1\), and integrality gives
\(|\widehat{\mathcal F}|\leq2r^d\).

\paragraph{Correctness}
To prove the correctness, fix independent sets \(Y_i\) in
\(M_i\) for \(i\in\{1,\ldots,d\}\).
Call a tuple \(F=(v_1,\ldots,v_d)\in\mathcal F\) \emph{good}
if \(\{v_i\}\) extends \(Y_i\) in \(M_i\) for every \(i\),
and \emph{bad} otherwise.
Since \(\widehat{\mathcal F}\subseteq\mathcal F\), any good tuple
in \(\widehat{\mathcal F}\) also belongs to \(\mathcal F\).
Thus, it suffices to prove the converse: if \(\mathcal F\) contains a good tuple,
then \(\widehat{\mathcal F}\) contains one as well.

We show that, as long as no good tuple has been retained, adding
\(T_F\) to \(L\) for any good tuple \(F\) increases \(\dim L\)
by at least \(t^d/2\). Thus the algorithm retains the first good
tuple it processes.

Let \(\pi_i:\mathcal V_i\to\mathcal V_i/S_i(Y_i)\) be the
quotient map, and let \(\pi:=\bigotimes_{i=1}^d\pi_i\).

For a good tuple \(F=(v_1,\ldots,v_d)\), adding \(v_i\) to
\(Y_i\) increases its rank in \(M_i\) by one.
Applying the representation guarantee to \(Y_i\cup\{v_i\}\)
and using that \(A_i[Y_i]\) has \(t|Y_i|\) columns gives
\[
\begin{aligned}
\dim\pi_i(S_i(v_i))
&=\dim S_i(Y_i\cup\{v_i\})-\dim S_i(Y_i)\\
&\geq t(|Y_i|+1-\delta)-t|Y_i|
=t(1-\delta).
\end{aligned}
\]
Since \(\pi(T_F)=\bigotimes_{i=1}^d\pi_i(S_i(v_i))\), we obtain
\begin{equation}\label{eq:good-image}
\dim\pi(T_F)\geq t^d(1-\delta)^d
\geq t^d(1-d\delta).
\end{equation}

For each \(i\), let \(C_i\) be the \emph{closure} of \(Y_i\)
in \(M_i\), that is,
\(C_i:=\{e\in V:r_{M_i}(Y_i\cup\{e\})=r_{M_i}(Y_i)\}\).
Since \(r_{M_i}(C_i)=r_{M_i}(Y_i)\),
\(\dim\pi_i(S_i(C_i))
=\dim S_i(C_i)-\dim S_i(Y_i)\leq2t\delta\).
A tuple \(F=(v_1,\ldots,v_d)\) is bad precisely when
\(v_i\in C_i\) for some \(i\).
For a fixed \(i\), the images of all tuples with \(v_i\in C_i\)
lie in the tensor product of \(\pi_i(S_i(C_i))\) with the other
quotient spaces. This space has dimension at most
\(2\delta t^d(r+\delta)^{d-1}\).
Summing over the \(d\) possible coordinates gives
\begin{equation}\label{eq:bad-span}
\dim\sum_{\substack{F\in\mathcal F\\F\text{ bad}}}\pi(T_F)
\leq2d\delta t^d(r+\delta)^{d-1}.
\end{equation}

Suppose that \(\mathcal F\) contains a good tuple \(F\), but every
retained tuple is bad. Let \(L_F\) be the space \(L\) immediately
before \(F\) is processed. Since \(L_F\) is spanned by subspaces
of bad tuples, \eqref{eq:bad-span} bounds \(\dim\pi(L_F)\).
A linear map cannot increase the dimension gained by adding a
subspace. Together with \eqref{eq:good-image}, this gives
\[
\begin{aligned}
\dim(L_F+T_F)-\dim L_F
&\geq\dim\pi(T_F)-\dim\pi(L_F)\\
&\geq t^d\bigl(1-d\delta-2d\delta(r+\delta)^{d-1}\bigr)\\
&\geq t^d\bigl(1-2d\delta(1+r^{d-1})\bigr)
\geq t^d/2.
\end{aligned}
\]
Here we use \((r+\delta)^{d-1}<r^{d-1}+1/2\), obtained by the
same binomial estimate as above, and the assumed bound on \(\delta\).
Thus the algorithm retains \(F\), a contradiction.
The output \(\widehat{\mathcal F}\) is therefore a representative family
for \(\mathcal F\).

It remains to bound the running time.
Each \(B_F\) has at most \(N^d\) rows and columns and can be
formed using \(N^{O(d)}\) field operations.
Maintain a basis matrix for \(L\). Gaussian elimination on its
concatenation with \(B_F\) computes \(\dim(L+T_F)-\dim L\)
and, when \(F\) is retained, a basis for \(L+T_F\), using
\(N^{O(d)}\) field operations.
Since \(|\mathcal F|\leq n^d\leq N^d\), and each field
operation takes \((\log p)^{O(1)}\) bit operations, the total
running time is \(N^{O(d)}\).
\end{proof}

\section{The Odd Cycle Transversal kernel}\label{sec:applications}

We prove \Cref{thm:oct-kernel} by implementing the cut-covering
construction of Kratsch and Wahlstr\"om \cite{KratschWahlstrom2020}
in deterministic polynomial time and combining it with a
deterministic approximation for \OCT{}.

The cut-covering construction in \Cref{subsec:randomized-cut-covering}
uses one uniform matroid and two gammoids, each of rank at most
\(r:=\max\{|S|,|T|\}\).
Choose \(\delta:=1/(12(r^2+1))\).
The construction in \Cref{subsec:gammoids} gives
\(\delta\)-almost multilinear representations of the gammoids
at a common scale over a common prime field.
Then \Cref{thm:selection} with \(d=3\) computes a representative
family of size at most \(2r^3\) in deterministic polynomial time.
Using this family in the cut-covering construction gives a set of size
\(O((|S|+|T|)^3)\), proving \Cref{thm:cut-covering}.

\cutcovering*

For undirected graphs, replace every edge by two opposite arcs.
We now use the cut-covering theorem to prove the \OCT{} kernel.

\octkernel*
\begin{proof}
Given \((G,k)\), the algorithm in \cite{KolayEtAl2020} either
correctly rejects or returns an odd cycle transversal \(X\) of size
\(O(k^2)\).
The OCT-to-cut reduction \cite{KratschWahlstrom2014}
and the terminal-copy reduction \cite{KratschWahlstrom2020}
let us apply \Cref{thm:cut-covering} with two terminal sets of size
\(O(|X|)\). Their cut correspondence gives a set \(Z\supseteq X\)
of size \(O(|X|^3)=O(k^6)\) containing a minimum odd cycle
transversal.

The remaining reduction of \cite{KratschWahlstrom2020} encodes
paths through \(G-Z\) by parity constraints on \(Z\).
After the forced-vertex reductions, represent odd-parity constraints
by edges and even-parity constraints by paths of length two.
A deleted subdivision vertex can be replaced by an endpoint without
increasing the solution size, so this gives an equivalent ordinary
\OCT{} instance with \(O(|Z|^2)=O(k^{12})\) vertices and edges
and output parameter at most \(k\).
All steps take deterministic polynomial time.
\end{proof}

\section*{Acknowledgements}
Tomohiro Koana was supported in part by JST CREST Grant Number
JPMJCR24Q2 and JST ERATO Grant Number JPMJER2301.

\section*{Declaration of generative AI use}
ChatGPT 6 Astra generated the proof of \Cref{thm:oct-kernel} and was
also used to draft the manuscript.
The authors verified and revised the proof and the manuscript and
take full responsibility for the paper.

\bibliographystyle{alphaurl}
\bibliography{references}
\end{document}